%% file: main.tex
\documentclass[10pt,journal,compsoc]{IEEEtran}

\usepackage{amsmath,amssymb,amsthm}
\usepackage{graphicx}
\usepackage{booktabs}
\usepackage{multirow}
\usepackage{array}
\usepackage{tabularx}
\usepackage{xcolor}
\usepackage{algorithm}
\usepackage{algpseudocode}
\usepackage{xspace}
\usepackage{cite}
\usepackage{hyperref}
\hypersetup{hidelinks}
\usepackage[capitalise,nameinlink]{cleveref}
\usepackage{balance}
\usepackage{microtype}
\usepackage{siunitx}
\usepackage{url}
\usepackage{placeins}
\usepackage{orcidlink}

\newtheorem{theorem}{Theorem}
\newtheorem{proposition}{Proposition}

\newtheorem{remark}{Remark}

\newcommand{\rg}{\textsc{RecurGuard}\xspace}
\newcommand{\qdm}{\textsc{QDM}\xspace}
\newcommand{\tauR}{\tau_{\!R}}
\newcommand{\tauV}{\tau_{\!V}}
\newcommand{\tauP}{\tau_{\!P}}
\newcommand{\taur}{\tau_r}
\newcommand{\simop}{\mathrm{sim}}
\newcommand{\TP}{\mathrm{TP}}
\newcommand{\RR}{\mathrm{RR}}
\newcommand{\VG}{\mathrm{VG}}
\newcommand{\MPD}{\mathrm{MPD}}
\newcommand{\ind}{\mathbf{1}}
\newcommand{\eg}{\emph{e.g.}\xspace}

\newcommand{\heading}[1]{\texorpdfstring{\textbf{#1}}{#1}}
\newcommand{\pointlabel}[1]{\textbf{#1}}

\renewenvironment{abstract}{%
  \par\noindent\textbf{Abstract}\par\smallskip
  \noindent\ignorespaces
}{%
  \par\vspace{0.6\baselineskip}
}
\definecolor{darkblue}{HTML}{1F3864}
\definecolor{alertred}{HTML}{CC0000}
\definecolor{goodgreen}{HTML}{1E5631}

\title{RecurGuard: Runtime Monitoring for Reasoning-Token Consumption Attacks}

\author{Abid~Aziz and Hafsa~Binte~Kibria~\orcidlink{0000-0003-4775-8639}\\
\small Department of Electrical \& Computer Engineering\\
\small Rajshahi University of Engineering \& Technology, Rajshahi 6204, Bangladesh\\
\small 2210019@student.ruet.ac.bd~$|$~hafsabintekibria@ece.ruet.ac.bd
}

\begin{document}
\maketitle

\begin{abstract}
\input{sections/00_abstract}
\end{abstract}

\input{sections/01_introduction}
\input{sections/02_background}
\input{sections/03_attacks}
\input{sections/04_method}
\input{sections/05_theory}
\input{sections/06_setup}
\input{sections/07_results}
\input{sections/08_adaptive}
\input{sections/09_deployment}
\input{sections/10_related}
\input{sections/11_conclusion}

\section*{Funding}
No external funding was received for this research.

\section*{Conflict of Interest}
The authors declare that they have no known competing financial interests or personal relationships that could have appeared to influence the work reported in this paper.

\section*{Data Availability}
The data and code supporting the findings of this study are available at: \url{https://github.com/abidaziz1/recurguard}

\balance
\bibliographystyle{IEEEtran}
\bibliography{refs}

\end{document}

%% file: sections/00_abstract.tex
Reasoning-capable large language models can be induced to spend their
generation budget on injected decoy tasks rather than answering the user's
question, causing denial of service when no final answer is produced and
denial of wallet when excess output tokens are billed. Input-side safety
classifiers often miss these attacks because the injected prompts can appear
syntactically benign. We build \rg{}, a runtime monitor for detecting
reasoning-chain consumption attacks when reasoning traces are exposed by the
model. \rg{} analyses reasoning traces as they are generated and tracks three
signals: recurrence rate, volume growth, and progress toward the user's query.
If all three signals remain anomalous for three consecutive chunks, \rg{}
terminates generation early. We evaluate \rg{} against OverThink and ExtendAttack across open-weight
reasoning models and conduct adaptive stress tests on DS-R1-Qwen-7B. On this
model, \rg{} detects 99\% of OverThink attacks and 92\% of ExtendAttack
instances while maintaining near-zero false positive rates on question
answering, code generation, mathematics, and summarisation. Adaptive evaluation
reveals the limit of the defence: topical attacks retain $11.9\times$
amplification with an approximately 50\% joint miss rate, whereas full semantic
evasion reduces amplification from $22.8\times$ to $2.2\times$. When reasoning
traces are unavailable, \qdm{} provides a post-hoc fallback monitor based on the
final output.

%% file: sections/01_introduction.tex

\section{\heading{Introduction}}
\label{sec:intro}

Reasoning-capable large language models (LLMs) often generate hundreds
or thousands of intermediate reasoning tokens before producing a final
answer~\cite{deepseek2025r1,qwen2025tech}. These extended reasoning
traces improve performance on complex tasks, but they also create a new
attack surface. An attacker who can inject text into the model's context,
through retrieved documents, tool outputs, or user-controlled
inputs~\cite{greshake2023injection}, can introduce a decoy task that
diverts the model's reasoning away from the user's query, an
unbounded-consumption risk~\cite{owasp2024llm} exercised by OverThink
and ExtendAttack~\cite{kumar2025overthink,extendattack2025}. Rather
than solving the intended task, the model spends its generation budget
on the injected decoy.

This behaviour creates two distinct harms. First, the user may receive no
useful answer because the model exhausts its generation budget before
returning to the original task. Second, the excess reasoning incurs
additional inference costs. In our evaluation, successful attacks increase
output-token consumption by more than an order of magnitude and, in some
cases, prevent the model from producing a final answer altogether. These
channels are common in retrieval-augmented generation (RAG), tool-using
agents, and enterprise LLM workflows, where external text can enter the
prompt context.

Three common defence layers are limited in this setting.
\emph{Input-side safety classifiers} such as Llama~Guard~3-1B-INT4~\cite{llamaguard3}
do not detect these attacks because the injected prompt contains a legitimate
question alongside a harmless-looking puzzle or decoding task; no
content-safety category is triggered. \emph{Length-based filters}, including
token-count thresholds and output-length anomaly detectors, catch many
high-amplification attacks but produce unacceptably high false positive rates
on legitimate long-form tasks: a token-budget filter produces 69\% FPR on
HumanEval code generation in our evaluation. \emph{Post-hoc semantic
monitors} flag the final output when it drifts from the user query, but cannot
halt generation early and can confuse legitimate summarisation, where the
output discusses an article rather than the bare instruction, with attack
outputs.

The geometric signal is in the trajectory, not the length alone. Clean
reasoning and attack reasoning move through embedding space differently.
Clean reasoning tends to progress through task-relevant content, with each
chunk introducing new material related to the user's question. Attack
reasoning collapses: after the decoy takes hold, each new chunk resembles the
model's own prior chunks more than it resembles the query. A detector that asks
only ``is the output close to the query?'' is insufficient: it can fire on
summarisation, where the output discusses article content that is semantically
distant from the short instruction. The more useful question is whether each
reasoning chunk is closer to the \emph{query} or to the model's own
\emph{recent reasoning history}. This distinction, which we call
task-conditioned progress, is the foundation of \rg{}.

We propose \textbf{\rg{}}, a lightweight online monitor that processes the
model's exposed reasoning trace chunk by chunk during generation. At each
64-word chunk, \rg{} computes three streaming signals: \emph{recurrence rate}
(RR), measuring how often the current chunk revisits the embedding territory of
recent prior chunks; \emph{volume growth} (VG), measuring whether the embedding
cloud is expanding into new semantic space; and \emph{task-conditioned
progress} (TP), measuring how much more similar the current chunk is to the
user's query than to any prior chunk. A joint alarm fires only when all three
signals indicate collapse for $S{=}3$ consecutive chunks. When the alarm fires,
generation can be halted immediately, saving the remaining output tokens.

For deployment settings where the reasoning trace is inaccessible, such as
closed APIs that return only final outputs, we pair \rg{} with \textbf{\qdm{}}
(Query Drift Monitor). \qdm{} is a lightweight post-hoc companion monitor that
scores final outputs against the query. It is useful when reasoning traces are
hidden, but it does not provide early stopping. We treat \qdm{} as a deployment
fallback rather than the main algorithmic contribution.
\Cref{fig:threat_model} summarises this placement: the injected decoy enters
through text-bearing context, \rg{} monitors exposed reasoning chunks during
generation, and \qdm{} checks the final output when traces are hidden.

\begin{figure*}[t]
\centering
\includegraphics[width=\textwidth]{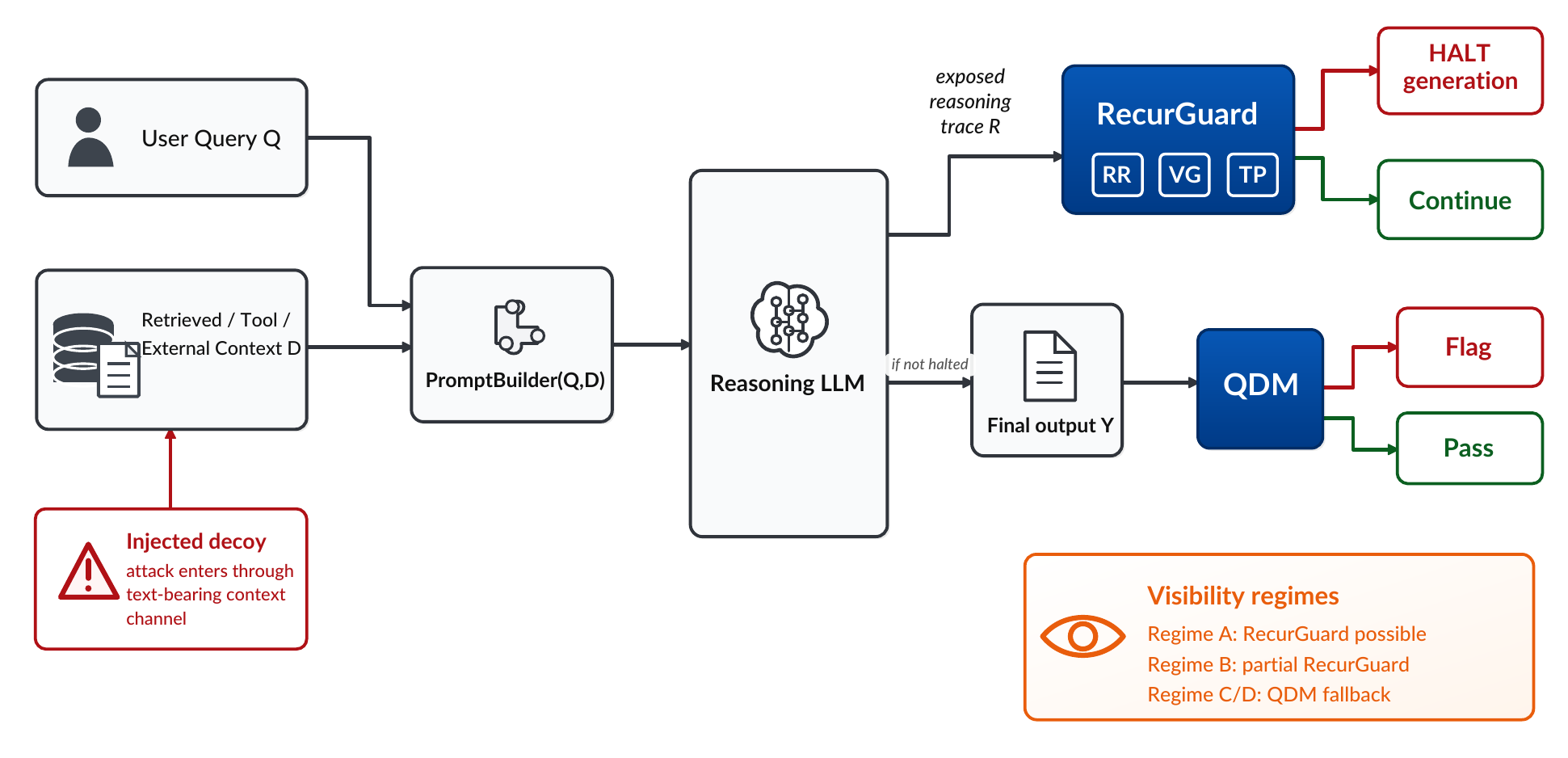}
\caption{Threat model and monitor placement. The attacker injects a
reasoning-consuming decoy through any text-bearing context channel.
\rg{} monitors exposed reasoning traces online and can halt generation;
\qdm{} operates post-hoc on the final output when reasoning traces are
hidden.}
\label{fig:threat_model}
\end{figure*}

On DS-R1-Qwen-7B~\cite{deepseek2025r1}, our primary evaluation model,
\rg{} detects 99\% of OverThink attacks~\cite{kumar2025overthink} with
0\% FPR on the held-out SQuAD~\cite{rajpurkar2016squad} benign set, and
92\% of ExtendAttack instances~\cite{extendattack2025}. Across legitimate
code generation, mathematics, and summarisation tasks, \rg{} maintains
near-zero FPR, while simple length-based baselines produce 69\% and 80\%
FPR on HumanEval~\cite{chen2021humaneval}. Against fully adaptive evasion
(\hyperref[sec:c4_analysis]{C4}), both \rg{} and \qdm{} are evaded, but
the attack's amplification collapses from $22.8\times$ to $2.2\times$,
consistent with the Evasion-Alignment theorem: in our C4 construction,
forcing task alignment causes the model to converge on an answer instead
of looping. The topical CSP attack (\hyperref[sec:c1_analysis]{C1})
remains an open boundary: it retains $11.9\times$ amplification while
achieving an approximately 50\% joint miss rate against both monitors.

This work makes four contributions:
\begin{itemize}
\item \textbf{Threat model.}
We formalise reasoning-chain consumption as a runtime security problem,
introduce defender-visibility regimes, and define token amplification and
liveness failure as the primary harm metrics.

\item \textbf{\rg{}.}
We propose \rg{}, a task-conditioned recurrence-collapse monitor for exposed
reasoning traces, and provide a theoretical analysis explaining its FPR
behaviour and the amplification tradeoff under full evasion.

\item \textbf{Evaluation.}
We evaluate \rg{}, \qdm{}, and four baselines against OverThink, ExtendAttack,
and adaptive stress tests across open-weight and closed-API models, with
Wilson-bounded TPR/FPR reporting and a McNemar paired comparison.

\item \textbf{Security boundary.}
We characterise the adaptive boundary: \hyperref[sec:c1_analysis]{C1} weakens
semantic monitors at $11.9\times$ amplification while remaining detectable by
length baselines; \hyperref[sec:c4_analysis]{C4} achieves full evasion only at
$2.2\times$ amplification.
\end{itemize}

%% file: sections/02_background.tex

\section{\heading{Background and Threat Model}}
\label{sec:threat}

Reasoning-capable large language models may generate an extended
reasoning trace before producing a final answer.
This trace is the attack surface studied here.
A prompt-injection attacker who can insert text into the model's
input context can force the trace to consume the entire output
budget on an irrelevant decoy task, denying the user a useful
answer while incurring the full token cost.
The following subsections define the deployment pipeline,
the attacker's capabilities, the defender's observability, and
the metrics used to measure attack severity.

\subsection{\heading{Reasoning-Chain Consumption Attacks}}
\label{sec:attack_class}

Reasoning models (\eg DeepSeek-R1~\cite{deepseek2025r1},
Qwen3~\cite{qwen2025tech}, Claude with extended
thinking~\cite{anthropic2026thinking}) produce intermediate reasoning text or hidden
reasoning states before answering.
An adversary who can inject text into the prompt context can
include a computationally expensive decoy task, such as a logic
puzzle or obfuscated decoding problem, along with an explicit
instruction to solve it before answering the user's question.
When the model complies, it may spend its entire reasoning budget
on the decoy and produce no useful final answer.

This behaviour causes two distinct harms.
\textbf{Denial of service (DoS):} the user receives no usable
answer, degrading service quality or availability.
\textbf{Denial of wallet (DoW):} the API provider or user is
billed for the full output token count, which can be orders of
magnitude larger than a normal query.
Both harms can arise without harmful content in the output,
so content-safety filters alone may miss them.
This attack class corresponds to
\textbf{OWASP LLM10: Unbounded Consumption}~\cite{owasp2024llm}.

\subsection{\heading{System Setting}}
\label{sec:system}

We study a text-context inference setting that includes direct
prompting and RAG-style deployments. The model receives a combined
prompt constructed from a user query $Q$ and additional context $D$:
\begin{equation}
  P = \mathrm{PromptBuilder}(Q,\,D).
  \label{eq:prompt}
\end{equation}
The model $M$ processes $P$ and produces a reasoning trace $R$
followed by a final answer $A$.
The resulting pipeline is:
\[
  Q,\,D \;\to\; P \;\to\; M \;\to\; R \;\to\; A.
\]
The injected decoy task can enter through any text-bearing
channel: user-controlled input, retrieved document content,
tool call outputs, or external context strings.
The attack does not require modifying the system prompt or the
model itself.

\subsection{\heading{Attacker Capabilities}}
\label{sec:attacker}

The attacker operates in a \emph{black-box prompt-injection}
setting.

\noindent The attacker can:
\begin{itemize}
\item insert or control text appearing in the prompt context $D$;
\item choose a decoy task designed to consume reasoning tokens
  (\eg a Sudoku puzzle or mixed-base ASCII decoding instruction);
\item in the adaptive threat setting, know the general design of
  the deployed detection method.
\end{itemize}

\noindent The attacker cannot:
\begin{itemize}
\item modify model weights or decoding parameters;
\item access hidden activations, logits, or internal model states;
\item modify the system prompt directly;
\item tamper with, disable, or route around the detector;
\item obtain gradients through the detector.
\end{itemize}

We study a black-box prompt-injection attacker, not a white-box
adversary with gradient access to the detector.
The attacker may still craft inputs intended to be missed by the
detector; this is the adaptive evasion setting in \Cref{sec:adaptive}.
Gradient-based optimisation against the sentence encoder is
explicitly outside our threat model and is listed as future work
in \Cref{sec:limitations}.

\subsection{\heading{Defender Visibility}}
\label{sec:visibility}

The defender's ability to apply \rg{} depends on whether the
reasoning trace is accessible.
Table~\ref{tab:visibility} defines four visibility regimes.

\begin{table}[t]
\centering
\caption{Defender visibility regimes.}
\label{tab:visibility}
\footnotesize
\setlength{\tabcolsep}{3pt}
\renewcommand{\arraystretch}{1.05}
\begin{tabularx}{\columnwidth}{@{}cXcc@{}}
\toprule
Regime & Defender observes & \rg{} & \qdm{} \\
\midrule
A & Full reasoning trace   & Yes     & Yes \\
B & Summarised thinking    & Partial & Yes \\
C & Empty/hidden reasoning & No      & Yes \\
D & Final output only      & No      & Yes \\
\bottomrule
\end{tabularx}
\end{table}

Regime~A covers open-weight reasoning models
(\eg DeepSeek-R1, Qwen3) where the trace is fully accessible.
Regime~B covers APIs that expose provider-summarised thinking
blocks (\eg Claude Sonnet~4.5 extended thinking); \rg{} is
applicable but with reduced sensitivity.
Regime~C covers APIs that return empty thinking blocks (\eg
Claude Opus~4.7 in our evaluation); \rg{} is structurally
inapplicable.
Regime~D covers generic closed APIs with no thinking-block
support; only output-side monitoring is available.
Detection statistics for each observed regime are reported in
\Cref{sec:results_closed,sec:deployment}.

\subsection{\heading{Attack Outcomes}}
\label{sec:outcomes}

We characterise an attacked generation by two metrics.

\textbf{Token amplification} measures the increase in output
token count relative to clean queries:
\begin{equation}
  \alpha = \frac{\mathbb{E}[T_{\mathrm{attack}}]}
                {\mathbb{E}[T_{\mathrm{clean}}]},
  \label{eq:amp}
\end{equation}
where $T_{\mathrm{attack}}$ and $T_{\mathrm{clean}}$ are the
output token counts for attacked and clean queries respectively.

\textbf{Liveness failure} captures whether the model produces a
usable answer:
\begin{equation}
  \mathrm{LF}(Y) = \ind\!\left[
    \text{answer tokens} < 5
    \;\lor\;
    T_{\mathrm{attack}} \geq T_{\max}
  \right],
  \label{eq:lf}
\end{equation}
where $T_{\max}$ is the generation budget.
The five-token cutoff treats empty strings, boilerplate fragments, and
truncated answer stubs as no usable answer.
$\mathrm{LF}(Y){=}1$ indicates the model exhausted its budget
without answering.

We distinguish three outcome classes:
\textbf{O1: Normal} denotes $\alpha \approx 1$;
\textbf{O2: Slowdown} denotes $\alpha > 1$ while the model still
answers ($\mathrm{LF}{=}0$);
and \textbf{O3: Liveness failure} denotes $\mathrm{LF}{=}1$.
The DoW threat includes O2 and O3; DoS corresponds to O3.

\subsection{\heading{Scope and Non-Goals}}
\label{sec:nongoals}

We address one specific threat: reasoning-chain consumption via
prompt injection in deployed inference pipelines.

We do not study jailbreaks, harmful-content generation, model
weight poisoning, training-time backdoors, or membership
inference.
These are orthogonal threat classes and our results do not apply
to them.

We do not claim to prevent all long outputs.
Long outputs are legitimate in code generation, mathematical
reasoning, and summarisation.
The goal is to detect reasoning that becomes both
\emph{off-task} and \emph{self-recurrent}, a pattern
characteristic of decoy-driven liveness failure and not
observed in our legitimate long-form evaluations.
The method section formalises this distinction.

%% file: sections/03_attacks.tex

\section{\heading{Attack Mechanisms}}
\label{sec:attacks}

We do not claim the primary attacks as contributions.
We reproduce two published reasoning-chain consumption attacks
and add adaptive variants to test whether semantic monitors
can be evaded.
Both primary attacks are publicly available and have been
evaluated on reasoning LLMs in their original
papers~\cite{kumar2025overthink,extendattack2025}.
Our contribution is the detection framework, not the attacks.

\subsection{\heading{OverThink}}
\label{sec:overthink}

OverThink~\cite{kumar2025overthink} injects an unrelated decoy
task into the prompt context alongside an explicit instruction
to complete it before answering the user's question.
The abstract prompt form is:
\begin{equation*}
  P = \mathrm{concat}\!\left(Q,\; I(D_{\mathrm{decoy}})\right),
  \quad
  I(D)=\texttt{``solve D first.''}
\end{equation*}
In our setup, $D_{\mathrm{decoy}}$ is a Sudoku-style constraint
satisfaction problem.
The decoy tends to trap the model's reasoning process: the model allocates most
of its reasoning budget to the puzzle, either failing to answer
$Q$ or producing a truncated answer after exhausting the token
cap.
The attack does not require any special privilege; it can be
delivered through any text field that reaches the model's
context window, including retrieved documents in a RAG pipeline.

\subsection{\heading{ExtendAttack}}
\label{sec:extendattack}

ExtendAttack~\cite{extendattack2025} takes a different approach.
Rather than inserting an unrelated task, it transforms the
user's own question into an obfuscated representation, such
as a mixed-base ASCII encoding, and instructs the model to
decode it character by character before answering.
The model must reconstruct the original question through
step-by-step decoding, generating substantial reasoning text
before it can address the actual query.

This mechanism differs from OverThink in two ways.
First, the decoy is structurally tied to the original question
rather than being entirely unrelated.
Second, the reasoning pattern is sequential decoding rather
than constraint satisfaction, which produces a different
recurrence signature in embedding space.
Their mechanistic differences motivate testing both as
separate attack families.

\subsection{\heading{Adaptive Variants}}
\label{sec:adaptive_variants}

The two primary attacks rely on decoys that are semantically
distant from the user query.
We design four adaptive stress-test variants to probe whether
semantic monitors can be evaded when the attacker controls the
semantic relationship between the decoy and the query.

\textbf{C1: Topical CSP.}
The decoy borrows vocabulary from the user's question topic.
The goal is to keep the generated reasoning close enough to the
query that query-distance monitors such as \qdm{} hesitate.

\textbf{C2: High-entropy filler.}
This variant avoids a compact repeated decoy and instead spreads
the filler across diverse topics.
The smoke test did not justify scaling it: at $N{=}3$, amplification
stayed at ${\leq}1.5\times$.

\textbf{C3: Structured-diverse filler.}
This variant changes the decoy structure across steps to reduce
repeated chunk similarity.
It also stayed at ${\leq}1.5\times$ amplification in the $N{=}3$
smoke test, so we did not scale it.

\textbf{C4: Task-aligned filler.}
The decoy remains semantically aligned with the user's actual
task, aiming to evade both query-distance and task-conditioned
progress signals simultaneously.

C2 and C3 serve as negative controls: they show that evasion
attempts which destroy recurrence also destroy the attack's
amplification.
We scaled C1 and C4 to $N{=}100$ on the primary model because they
were the only adaptive variants that still produced meaningful
amplification.

\subsection{\heading{Attack Taxonomy}}
\label{sec:taxonomy}

Table~\ref{tab:taxonomy} summarises the six attack configurations
used in this paper.

\begin{table}[t]
\centering
\caption{Attack taxonomy.
``Source'' indicates whether the attack is reproduced from prior
work or is an adaptive stress-test variant.
``Decoy relation'' describes the semantic relationship between
the decoy and the user query.}
\label{tab:taxonomy}
\footnotesize
\setlength{\tabcolsep}{3pt}
\renewcommand{\arraystretch}{1.05}
\begin{tabularx}{\columnwidth}{@{}>{\raggedright\arraybackslash}p{0.27\columnwidth}lXl@{}}
\toprule
Attack & Source & Decoy relation to query & Scaled \\
\midrule
OverThink    & Prior work & Unrelated          & Yes ($N{=}100$) \\
ExtendAttack & Prior work & Encoded query      & Yes ($N{=}50$)  \\
C1: Topical CSP & Stress test & Topically related & Yes ($N{=}100$) \\
C2: High-entropy filler & Stress test & Diverse, unrelated & Smoke ($N{=}3$) \\
C3: Structured-diverse filler & Stress test & Structured diverse & Smoke ($N{=}3$) \\
C4: Task-aligned filler & Stress test & Task-aligned & Yes ($N{=}100$) \\
\bottomrule
\end{tabularx}
\end{table}

These attacks create different geometric patterns in the
reasoning embedding space.
Unrelated decoys (OverThink) produce low query alignment and
high chunk-to-chunk recurrence.
Encoded-query decoys (ExtendAttack) produce moderate query
alignment with sequential recurrence from the decoding process.
Adaptive variants try to preserve query alignment (C4) or
reduce recurrence (C2, C3), while C1 attempts both partially.
\rg{} turns this geometry into streaming detector signals during
generation.

%% file: sections/04_method.tex

\section{\heading{RecurGuard}}
\label{sec:method}

\subsection{\heading{Design Goal}}
\label{sec:design_goal}

\rg{} watches exposed reasoning traces while the model is still
generating.
It looks for the point where the trace stops moving with the user's
task and starts returning to the same decoy reasoning.

The central design choice is to compare each reasoning chunk
against two anchors: the user query and the model's own previous
reasoning.
A chunk is suspicious only when it moves \emph{away from the query}
while becoming \emph{more similar to prior chunks}.
The suspicious case is the conjunction: off-task and self-recurrent
at the same time.
That is the signature of reasoning-chain liveness failure.
Checking either condition alone is insufficient: legitimate
summarisation produces chunks that are off-task but not
self-recurrent, while legitimate analytical reasoning can produce
self-similar but on-task chunks.

\rg{} is paired with \qdm{}, a lightweight output-side monitor for
settings where reasoning traces are inaccessible.
\qdm{} is a deployment companion rather than the main contribution.
\Cref{sec:qdm} gives its design and limitations.

\subsection{\heading{Reasoning Trace Chunking}}
\label{sec:chunking}

Let $R = (w_1, \ldots, w_n)$ be the sequence of words in the
model's exposed reasoning trace, and let $Q$ denote the user query.
We partition $R$ into non-overlapping chunks of $k$ words:
\begin{equation}
  c_t = (w_{(t-1)k+1},\ldots,w_{tk}),
  \qquad t = 1,2,\ldots
  \label{eq:chunk}
\end{equation}
with $k{=}64$.
Each chunk is encoded as a fixed-length embedding:
\begin{equation}
  \mathbf{e}_t = f(c_t) \in \mathbb{R}^d,
  \qquad
  \mathbf{q} = f(Q) \in \mathbb{R}^d,
  \label{eq:embed}
\end{equation}
where $f$ is the SentenceTransformers
\texttt{all-MiniLM-L6-v2} encoder, following the SBERT
sentence-embedding framework~\cite{reimers2019sbert}
and the model card for the released encoder~\cite{sentencetransformers2026minilm}
($d{=}384$), normalised to unit length.
All similarity computations use cosine similarity
$\simop(a,b){=}\langle a,b\rangle$.

\subsection{\heading{Signals}}
\label{sec:signals}

\rg{} computes three scalar signals per chunk using a sliding
window $W_t = \{\max(1,t{-}W),\ldots,t{-}1\}$ of width $W{=}20$.
If the trace is empty because no reasoning text is exposed, \rg{}
returns no alarm and is treated as inapplicable in deployment.
For the first chunk, where no prior chunk exists, we set
$\RR_1{=}0$ and set $\TP_1{=}\simop(\mathbf{e}_1,\mathbf{q})$;
alarm decisions are suppressed by the $c_{\min}$ condition until
enough context is available.
The implementation stores prior 384-dimensional chunk embeddings
for the current generation so TP can compare each new chunk against
earlier chunks; this memory cost is small relative to model
inference in the evaluated runs.

\medskip\noindent\textbf{RR: Recurrence Rate.}
The fraction of window chunks whose similarity to $\mathbf{e}_t$
exceeds a hard inner threshold $\taur$:
\begin{equation}
  \RR_t =
  \frac{1}{|W_t|}
  \sum_{i \in W_t}
  \ind\!\left[\simop(\mathbf{e}_t,\mathbf{e}_i) > \taur\right],
  \qquad \taur{=}0.70.
  \label{eq:rr}
\end{equation}
$\RR_t{=}0$ under diverse reasoning; $\RR_t \to 1$ when each new
chunk revisits the same semantic region.

\medskip\noindent\textbf{VG: Volume Growth.}
The change in mean pairwise distance (MPD) of the embedding cloud
when the current chunk is added:
\begin{align}
  \MPD(S) &=
  \frac{2}{|S|(|S|-1)}
  \sum_{\substack{i<j \\ i,j \in S}}
  \!\left\|\mathbf{e}_i-\mathbf{e}_j\right\|_2,
  \label{eq:mpd}\\
  \VG_t &= \MPD(W_t \cup \{t\}) - \MPD(W_t).
  \label{eq:vg}
\end{align}
For $|S|<2$, we set $\MPD(S)=0$; the implementation assigns early
VG values a non-alarming convention and suppresses alarms through
the $t \geq c_{\min}$ condition.
Positive $\VG_t$ indicates the cloud is expanding (novel content);
negative or near-zero $\VG_t$ indicates semantic collapse
(the reasoning is no longer covering new ground).

\medskip\noindent\textbf{TP: Task-Conditioned Progress.}
\begin{equation}
  \TP_t =
  \simop(\mathbf{e}_t, \mathbf{q})
  \;-\;
  \max_{i<t}\,\simop(\mathbf{e}_t,\mathbf{e}_i).
  \label{eq:tp}
\end{equation}
$\TP_t > 0$ when the current chunk is closer to the query than to
any prior chunk (productive progress).
$\TP_t \ll 0$ when the chunk is far from the query but similar to
prior chunks (recurrence collapse).
TP is the dominant detection signal; the ablation in
\Cref{sec:ablation} shows that TP alone achieves 99\% TPR
and 0\% FPR.

\subsection{\heading{Sustained Alarm Rule}}
\label{sec:alarm}

The per-chunk alarm indicator is:
\begin{equation}
  A_t = \ind\!\left[
    \RR_t > \tauR \;\land\;
    \VG_t < \tauV \;\land\;
    \TP_t < \tauP \;\land\;
    t \geq c_{\min}
  \right],
  \label{eq:alarm}
\end{equation}
with calibrated thresholds
$\tauR{=}0.20$, $\tauV{=}0.01$, $\tauP{=}{-0.738}$,
and minimum pre-alarm chunk count $c_{\min}{=}3$.
An attack is declared when $A_t{=}1$ for $S{=}3$ consecutive
chunks; the counter resets to zero on any $A_t{=}0$.
The sustained requirement prevents isolated anomalous chunks, which
occur in normal reasoning, from triggering false alarms.
Algorithm~\ref{alg:recurguard} gives the full procedure.

\begin{algorithm}[t]
\caption{\rg{}: Task-Conditioned Recurrence Collapse Detection}
\label{alg:recurguard}
\begin{algorithmic}[1]
\Require Streaming reasoning text $R$, query $Q$,
         thresholds $\tauR,\tauV,\tauP$,
         parameters $k,W,S,c_{\min}$
\Ensure \textsc{Attack} or \textsc{Clean}
\State $\mathbf{q} \leftarrow f(Q)$;
       $\text{alarm\_count} \leftarrow 0$;
       $t \leftarrow 0$
\While{next $k$-word chunk $c$ available from $R$}
  \State $t \leftarrow t + 1$; \;
         $c_t \leftarrow c$; \;
         $\mathbf{e}_t \leftarrow f(c_t)$
  \State Compute $\RR_t$, $\VG_t$, $\TP_t$
         via \eqref{eq:rr}, \eqref{eq:vg}, \eqref{eq:tp}
  \If{$A_t = 1$}
    \State $\text{alarm\_count} \leftarrow \text{alarm\_count} + 1$
  \Else
    \State $\text{alarm\_count} \leftarrow 0$
  \EndIf
  \If{$\text{alarm\_count} = S$}
    \State \Return \textsc{Attack} \Comment{halt generation here}
  \EndIf
\EndWhile
\State \Return \textsc{Clean}
\end{algorithmic}
\end{algorithm}

When the deployment exposes streaming reasoning chunks and supports
client-side cancellation, \rg{} halts generation at the trigger
chunk, saving the remaining output tokens.
The token savings obtained from this early stop rule are reported
in \Cref{sec:deployment}.

\subsection{\heading{Query Drift Monitor}}
\label{sec:qdm}

\qdm{} is a post-hoc output-side companion monitor for deployment
settings where reasoning traces are hidden.
It splits the final output $Y$ into 80-word chunks
$\{y_1,\ldots,y_M\}$ and computes mean cosine similarity to the
query:
\begin{equation}
  \mathrm{QDM}(Y,Q) =
  \frac{1}{M}\sum_{m=1}^{M}\simop\!\left(f(y_m),\,f(Q)\right).
  \label{eq:qdm}
\end{equation}
A response is flagged if $\mathrm{QDM}(Y,Q) < 0.318$.
If $Y$ is empty or contains no non-empty chunk after stripping
whitespace, we set $\mathrm{QDM}(Y,Q){=}0$, so the response is
flagged by definition.
If $Y$ is non-empty but shorter than 80 words, the residual text is
treated as one chunk.
Unlike \rg{}, \qdm{} operates post-hoc on the final output and
does not use recurrence; it cannot enable early stopping.
Its threshold is derived from a development-set comparison of
clean and attacked outputs rather than from the mixed-benign
grid search used for \rg{}.
\qdm{} therefore uses an attack-informed development threshold and
should be interpreted as a fallback companion monitor rather than a
benign-only calibrated detector.
\qdm{} development uses the 50 clean SQuAD calibration examples
(indices 0--49) and matched OverThink development outputs from the
same questions; it does not use the held-out OverThink rows
(200--299), ExtendAttack rows (205--254), or adaptive C1/C4 rows
(255--354).
\qdm{} is not claimed as a primary novel contribution; it
provides fallback coverage on closed APIs where only final output
text is accessible.

\subsection{\heading{Calibration}}
\label{sec:calibration}

All \rg{} thresholds are calibrated on benign data exclusively.
We use a mixed-benign development set: 50 clean SQuAD traces
(indices 0--49) plus development subsets from HumanEval, GSM8K,
and CNN/DailyMail.
Percentile statistics over this set seed candidate threshold grids;
the final configuration is selected by grid search as the most
sensitive setting that achieves zero false positives across the
complete mixed-benign set.
Calibrating on mixed task types prevents the configuration from
overfitting to short QA outputs.
No attacked data appear in the calibration set; the thresholds
are locked before any test evaluation begins.

The \qdm{} threshold ($\tau{=}0.318$) is chosen from this
clean/attacked development comparison.
It is not iterated to optimise FPR and is treated as a companion
threshold, not as a primary calibrated detector.

\subsection{\heading{Runtime Behaviour}}
\label{sec:runtime}

\rg{} can operate as an online stop rule when two conditions hold:
the deployment exposes streaming reasoning chunks as they are
generated, and the client can issue a cancellation signal to halt
generation.
Under these conditions, \rg{} runs in the generation loop and
stops the model at the trigger chunk, converting a full-budget
generation into an early-terminated one.

\qdm{} cannot provide this capability because it requires the
complete output to be available before computing mean chunk
similarity.
For closed-API deployments with summarised reasoning, \rg{} can run
with reduced sensitivity and \qdm{} provides a post-hoc companion
check.
When reasoning text is hidden, \rg{} is structurally inapplicable
and \qdm{} is the fallback monitor.
The deployment trade-offs between \rg{} and \qdm{} across
different API visibility settings are discussed in
\Cref{sec:deployment}.

%% file: sections/05_theory.tex

\section{\heading{Theoretical Analysis}}
\label{sec:theory}

The results below are explanatory and falsifiable, not universal
security proofs.
They give geometric and probabilistic reasons for three observed
facts: TP dominance in the ablation, near-zero FPR under sustained
alarms, and the amplification collapse under full evasion.
Empirical Wilson confidence intervals (\Cref{sec:results}) provide
the operational performance certificates.

\subsection{\heading{Geometric View of Reasoning Collapse}}
\label{sec:geom}

Let $f : \text{text} \to \mathbb{R}^d$ be the sentence encoder
(\texttt{all-MiniLM-L6-v2}, $d{=}384$).
All embeddings are $\ell_2$-normalised, so cosine similarity equals
the inner product:
\[
  \simop(a,b) = \langle a,b \rangle, \qquad
  \|a\|_2 = \|b\|_2 = 1.
\]
Define the query embedding $\mathbf{q} = f(Q)$ and, for each
64-word reasoning chunk $c_t$, the chunk embedding
$\mathbf{e}_t = f(c_t)$.
Two scalar quantities track the trajectory:
\begin{align}
  a_t &= \simop(\mathbf{e}_t, \mathbf{q})
        \quad\text{(task alignment)},
        \label{eq:at}\\
  r_t &= \max_{i<t}\,\simop(\mathbf{e}_t,\mathbf{e}_i)
        \quad\text{(recurrence to past reasoning)}.
        \label{eq:rt}
\end{align}
Using the TP definition in \eqref{eq:tp}, we write
$\TP_t = a_t - r_t$.
Under clean reasoning, chunks introduce new task-relevant content,
keeping $a_t$ moderate and $r_t$ low.
Under an OverThink attack, after the decoy takes over, each chunk
resembles prior Sudoku-constraint chunks more than it resembles the
user query: $r_t$ is high, $a_t$ is low, so $\TP_t$ becomes
strongly negative.
We model this as an expected property of recurrence attacks; we do
not claim it holds for all possible attack constructions.

\subsection{\heading{Task-Conditioned Progress Separation}}
\label{sec:tp_sep}

\begin{theorem}[Task-conditioned progress separation]
\label{thm:tp_separation}
If $r_t - a_t \geq m$ for some $m > 0$, then $\TP_t \leq -m$.
Consequently, if $m > |\tauP|$, the TP alarm condition
$\TP_t < \tauP$ is satisfied at chunk $t$.
\end{theorem}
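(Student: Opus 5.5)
The plan is to argue directly from the decomposition $\TP_t = a_t - r_t$ established in \Cref{sec:geom}, which follows immediately from \eqref{eq:tp}, \eqref{eq:at}, and \eqref{eq:rt}. I will first check that $r_t$ is well defined, since $\max_{i<t}$ needs at least one prior chunk. For $t=1$ the paper sets $\TP_1 = a_1$, and the hypothesis $r_t - a_t \ge m$ only makes sense when $r_t$ exists. So I will take $t \ge 2$ and note that this is consistent with the $c_{\min}=3$ suppression in \eqref{eq:alarm}. With $t\ge 2$, the first claim is a one-line rearrangement:
\[
  \TP_t = a_t - r_t = -(r_t - a_t) \le -m .
\]

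For the second claim I will use the sign of the calibrated threshold. Since $\tauP = -0.738 < 0$, we have $|\tauP| = -\tauP$. The condition $m > |\tauP|$ therefore reads $-m < \tauP$. Chaining this with the first claim gives $\TP_t \le -m < \tauP$, which is the strict inequality the TP conjunct of \eqref{eq:alarm} requires. I will state explicitly that the argument uses $\tauP<0$. If $\tauP$ were nonnegative, $|\tauP|=\tauP$ and the conclusion would still follow, because $-m<0\le\tauP$. So the result holds for any real threshold, and I may remark on this for robustness.

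There is no substantive obstacle; the statement is a direct algebraic consequence of the definitions. The only point needing care is interpretive. The theorem guarantees only the TP conjunct of $A_t$, not the full alarm, since $\RR_t>\tauR$, $\VG_t<\tauV$, and $t\ge c_{\min}$ are separate conditions. It also does not guarantee the sustained $S$-chunk trigger. I will add a closing remark saying so. I will also note, without proof, that cosine bounds give $a_t,r_t\in[-1,1]$, so any admissible margin satisfies $m\le 2$. The regime $|\tauP| < m \le 2$ is therefore nonempty, which makes the theorem non-vacuous at the calibrated threshold.
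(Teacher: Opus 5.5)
Your proposal is correct and is essentially the paper's proof, which is the single rearrangement $\TP_t = a_t - r_t = -(r_t - a_t) \le -m$ from \eqref{eq:tp}. You also write out the ``consequently'' step, $-m < -|\tauP| \le \tauP$, which the paper leaves implicit, and your remarks on well-definedness for $t \ge 2$ and on the theorem covering only the TP conjunct are accurate.
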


\begin{proof}
By definition~\eqref{eq:tp},
$\TP_t = a_t - r_t \leq -(r_t - a_t) \leq -m$.
\end{proof}

\noindent\textbf{Interpretation.}
The theorem states that TP fires when a chunk is both off-task
($a_t$ low) \emph{and} self-recurrent ($r_t$ high).
This conjunction distinguishes attack reasoning from legitimate
long-form generation.
In summarisation, chunks often introduce new source-document
content, so recurrence does not persist in the same way as a
decoy loop.
In our CNN/DailyMail evaluation, TP does not remain below $\tauP$
for $S{=}3$ consecutive chunks, while \qdm{} fires because it only
observes low query similarity.
The ablation shows the same pattern from another angle: TP-only
achieves 99\% TPR and 0\% FPR on the primary OverThink evaluation,
while RR-only and VG-only show higher FPR.

\subsection{\heading{Sustained Alarm and False-Positive Control}}
\label{sec:sustained}

Define the per-chunk alarm indicator:
\begin{equation}
  A_t = \ind\!\left[
    \RR_t > \tauR \;\land\;
    \VG_t < \tauV \;\land\;
    \TP_t < \tauP \;\land\;
    t \geq c_{\min}
  \right].
\end{equation}
The detector fires when $A_t = 1$ for $S$ consecutive chunks.

\begin{proposition}[Sustained alarm false-positive decay]
\label{prop:fpr_decay}
Let $p_0 \in (0,1)$ be an upper bound on the probability that all
three alarm conditions hold simultaneously for a single chunk under
benign reasoning.
Under weak temporal dependence with correction factor $\eta \geq 1$,
the probability of $S$ consecutive alarms satisfies:
\[
  \Pr\!\left(\bigcap_{j=0}^{S-1} \{A_{t+j}=1\}\right)
  \;\leq\; \eta\cdot p_0^{\,S}.
\]
\end{proposition}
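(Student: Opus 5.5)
The plan is to make ``weak temporal dependence with correction factor $\eta$'' precise and then show that the bound follows from a chain-rule factorisation. The cleanest formalisation that still respects the heuristic intent is a bounded-likelihood-ratio assumption. Write $B_j=\{A_{t+j}=1\}$ for $j=0,\ldots,S-1$ and $\mathcal{F}_{j}=\sigma(A_t,\ldots,A_{t+j-1})$. We assume that under benign reasoning the joint law of $(A_t,\ldots,A_{t+S-1})$ is dominated by the product of its marginals, up to the factor $\eta$ on the all-ones event:
\[
  \Pr\!\Bigl(\bigcap_{j=0}^{S-1} B_j\Bigr) \;\leq\; \eta \prod_{j=0}^{S-1}\Pr(B_j).
\]
Equivalently, this can be phrased per step through the conditional probabilities $\Pr(B_j\mid B_0,\ldots,B_{j-1})\leq \eta_j\Pr(B_j)$ with $\prod_j \eta_j=\eta$. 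I would state the per-step version as the primary assumption, since it is the one a reader can check empirically from consecutive-chunk alarm statistics. Under exact independence we have $\eta=1$.

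The steps are as follows.
\begin{enumerate}
\item Expand the intersection by the chain rule:
\[
  \Pr\!\Bigl(\bigcap_j B_j\Bigr)=\Pr(B_0)\prod_{j=1}^{S-1}\Pr\!\Bigl(B_j\,\Big|\,\bigcap_{i<j}B_i\Bigr).
\]
If some prefix intersection has probability zero, the claim is trivial, so we may assume every conditioning event has positive probability.
\item Apply the per-step dependence bound to each conditional factor, which gives $\Pr(\bigcap_j B_j)\leq \bigl(\prod_j\eta_j\bigr)\prod_j \Pr(B_j)$.
\item Use the hypothesis that $p_0$ bounds the single-chunk probability that all three threshold conditions hold. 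Since $A_{t+j}=1$ requires those conditions plus $t+j\geq c_{\min}$, we get $\Pr(B_j)\leq p_0$ for every $j$. Substituting this yields $\eta\,p_0^{S}$.
\end{enumerate}
I would close by remarking that the bound decays geometrically in $S$ whenever $\eta$ grows subexponentially in $S$. This is the situation, for instance, under $\psi$-mixing with coefficient $\psi(1)$, where $\eta\leq(1+\psi(1))^{S-1}$ and the bound is useful when $(1+\psi(1))p_0<1$.

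The main obstacle is not the calculation but justifying that the dependence assumption is not vacuous. Consecutive chunks share a sliding window $W_t$, and $\TP_t$ uses a running maximum over all prior chunks, so the $A_t$ are strongly positively correlated by construction. For this reason I would refrain from deriving $\eta$ from first principles. Instead I would present it as a modelling assumption: a mixing-type condition on the benign embedding process transferred to the indicators. Because each $A_t$ is a measurable function of the chunks within distance $W$, mixing of the chunk process implies mixing of the $A_t$ with coefficients shifted by $W$. This is the one step where I would invoke a standard transfer lemma rather than prove it. I would then point to the empirical FPR intervals of \Cref{sec:results} as the operational check, consistent with the section's stance that these results are explanatory rather than certificates.
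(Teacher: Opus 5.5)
Your argument is correct and follows the same route as the paper. The paper's proof simply bounds each single-chunk alarm probability by $p_0$ and lets $\eta\geq 1$ ``absorb the dependence structure''. Your chain-rule factorisation with the per-step condition $\Pr(B_j\mid B_0,\ldots,B_{j-1})\leq \eta_j\Pr(B_j)$ makes that hypothesis explicit; take $\eta=\max\{1,\prod_j\eta_j\}$ to match the requirement $\eta\geq 1$.

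One correction concerns your closing paragraph, which does not affect the proof itself. $A_t$ is not a function of the chunks within distance $W$, because $\TP_t$ maximises over all $i<t$, as you note a sentence earlier. Even for the $W$-local signals $\RR_t$ and $\VG_t$, a block-function transfer lemma only controls dependence at lags larger than $W$. With $S=3<W=20$, mixing of the embedding process therefore says nothing about the lag-one dependence among $A_t,A_{t+1},A_{t+2}$ that $\eta$ must capture. The dependence condition has to be posited directly on the indicators, as in your per-step version, rather than inherited from mixing of the chunks.
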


\begin{proof}
By assumption, each joint alarm event has marginal probability at
most $p_0$.
Under weak dependence, the probability of $S$ consecutive events
is bounded by $\eta p_0^S$ where $\eta \geq 1$ absorbs the
dependence structure.
\end{proof}

\begin{remark}
The threshold $\tauP{=}{-0.738}$ requires $\TP_t$ to be strongly
negative; $\tauR{=}0.20$ requires elevated recurrence; and
$\tauV{=}0.01$ requires near-zero embedding-cloud growth.
This joint event is geometrically restrictive for normal reasoning
text: all three alarm conditions must hold together for $S{=}3$
consecutive chunks.
Requiring $S{=}3$ multiplies the bound by $p_0^2$ relative to
$S{=}1$, giving exponential suppression.

This proposition is a \emph{stylised explanatory bound}.
Real reasoning text is not isotropic; $p_0$ is not derived from
first principles but is bounded empirically.
The bound explains \emph{why $S$ matters}, not what the exact
false-positive rate is.
Thresholds are set by mixed-benign grid search
(\Cref{sec:calibration_proc}), and the actual FPR is certified by
the held-out experiment in \Cref{prop:certificate}.
\end{remark}

\subsection{\heading{Evasion-Alignment Tradeoff}}
\label{sec:evasion_theory}

\begin{theorem}[Evasion-alignment condition]
\label{thm:evasion}
For any chunk $t$ at which the TP alarm does not fire
($\TP_t \geq \tauP$), the attacker must satisfy:
\begin{equation}
  a_t \;\geq\; r_t + \tauP.
  \label{eq:evasion}
\end{equation}
\end{theorem}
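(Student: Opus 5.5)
The claim is an algebraic rearrangement of the definition of $\TP_t$, in the same spirit as the proof of \Cref{thm:tp_separation}. I would start from the identity $\TP_t = a_t - r_t$, which follows from \eqref{eq:tp} together with the definitions \eqref{eq:at} and \eqref{eq:rt}. Since both the TP alarm condition and the evasion inequality involve only $a_t$, $r_t$ and the fixed threshold $\tauP$, no probabilistic or geometric assumptions are needed.

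The steps run in this order.
\begin{enumerate}
\item Fix a chunk $t \geq 2$ at which the TP condition does not fire. By \eqref{eq:alarm}, this means the conjunct $\TP_t < \tauP$ fails, i.e.\ $\TP_t \geq \tauP$.
\item Substitute $\TP_t = a_t - r_t$ to get $a_t - r_t \geq \tauP$.
\item Add $r_t$ to both sides to obtain \eqref{eq:evasion}. Every quantity is a finite real number in $[-1,1]$ because the embeddings are unit vectors, so adding $r_t$ preserves the inequality.
\end{enumerate}

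The only delicate point is the boundary case $t=1$, where $r_1$ is undefined: the maximum in \eqref{eq:rt} is taken over an empty set. There I would adopt the convention already fixed in \Cref{sec:signals}, namely $\TP_1 = \simop(\mathbf{e}_1,\mathbf{q})$. This amounts to setting $r_1 = 0$, and the same rearrangement then gives $a_1 \geq \tauP$. Alternatively, one can note that $t=1 < c_{\min}$, so the alarm is suppressed there regardless and the statement can be restricted to $t \geq 2$.

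I would close with the interpretation rather than further proof. With $\tauP = -0.738$, an evading chunk that is strongly recurrent ($r_t$ near $1$) must keep its query alignment $a_t$ at least about $0.26$. This links forced task alignment to the amplification collapse observed for C4.
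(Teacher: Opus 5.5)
Your proposal is correct and matches the paper's proof, which is the same one-line substitution $\TP_t = a_t - r_t \geq \tauP \Rightarrow a_t \geq r_t + \tauP$. Your handling of the $t=1$ boundary is an extra the paper leaves implicit: there $r_1$ is a maximum over an empty set, covered either by the convention $\TP_1 = \simop(\mathbf{e}_1,\mathbf{q})$ or by alarm suppression via $c_{\min}$.
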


\begin{proof}
From~\eqref{eq:tp}, $\TP_t = a_t - r_t \geq \tauP$ implies
$a_t \geq r_t + \tauP$.
\end{proof}

\noindent\textbf{Interpretation.}
Since $\tauP = -0.738 < 0$, condition~\eqref{eq:evasion}
requires task alignment $a_t$ to remain sufficiently close to the
recurrence level $r_t$.
With respect to TP, a fully evading attacker cannot loop freely in
an unrelated semantic region: it must either reduce recurrence or
increase task alignment.
The C4: Task-aligned filler attack takes the second path, staying close
to the user query, which causes the model to converge on an
answer rather than loop indefinitely.
Amplification collapses from $22.8\times$ (OverThink) to
$2.2\times$ (C4), consistent with this tradeoff.

\noindent\textbf{Scope.}
\Cref{thm:evasion} constrains \emph{full} TP evasion, where
$\TP_t \geq \tauP$ at every chunk.
It does not rule out partial evasion, which is exactly the C1
boundary observed in \Cref{sec:adaptive}: C1 evades the TP alarm
on enough chunks to achieve $\approx$50\% joint miss at
$11.9\times$ amplification, a setting the theorem does not bound.

\subsection{\heading{Empirical FPR Certificate}}
\label{sec:certificate}

\begin{proposition}[Held-out false-positive certificate]
\label{prop:certificate}
\rg{} produces $k{=}0$ false positives on $n{=}150$ fully
held-out benign SQuAD queries (indices 50--199, shuffled with
\texttt{seed=42}).
The Wilson 95\% upper confidence bound on the true FPR is
$\hat{p}_\text{upper} < 2.5\%$.
\end{proposition}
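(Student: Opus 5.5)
The proposition has two parts of different character, and I would treat them separately. The first part, that \rg{} raises $k{=}0$ alarms on the $n{=}150$ held-out SQuAD queries (indices 50--199, \texttt{seed=42}), is an empirical fact. It cannot be derived from \Cref{thm:tp_separation} or \Cref{prop:fpr_decay}. I would establish it by the locked-threshold protocol of \Cref{sec:calibration}. The thresholds $\tauR,\tauV,\tauP,c_{\min},S$ are fixed on the mixed-benign development set, which uses indices 0--49 and is disjoint from 50--199. I then run Algorithm~\ref{alg:recurguard} once on each held-out trace and count the \textsc{Attack} returns. The only logical point to verify is that no held-out index enters calibration, so that the $150$ trials are a legitimate held-out sample. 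Treating them as i.i.d.\ Bernoulli draws with unknown FPR $p$ is a modelling assumption, and I would state it explicitly.

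The second part is a closed-form calculation on the Wilson score interval. With $\hat p = k/n$ and $z = 1.96$, the Wilson upper bound is
\[
  \hat p_{\text{upper}}
  = \frac{\hat p + \frac{z^2}{2n} + z\sqrt{\frac{\hat p(1-\hat p)}{n} + \frac{z^2}{4n^2}}}{1 + \frac{z^2}{n}} .
\]
Setting $\hat p = 0$ makes the square root equal to $z/(2n)$. The numerator then becomes $z^2/n$, so
\[
  \hat p_{\text{upper}} = \frac{z^2/n}{1+z^2/n} = \frac{z^2}{n+z^2}.
\]

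Substituting $z^2 = 3.8416$ and $n = 150$ gives $3.8416/153.8416 \approx 0.02497$. To show this is strictly below $0.025$ without rounding error, I would check the equivalent inequality $z^2 < 0.025\,(n+z^2)$. This reduces to $0.975\,z^2 < 3.75$, that is $3.7456 < 3.75$, which holds.

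I expect the main obstacle to be the margin rather than the algebra. The bound is only about $3\times10^{-5}$ below $2.5\%$, so the claim is sensitive to the exact choice of $z$. It holds for $z=1.96$, but would fail for $z = 1.9600$-and-change, for example $z_{0.975}\approx 1.95996$ rounded upward carelessly. It would also fail for a single additional false positive or for a slightly smaller $n$. I would therefore fix $z=1.96$ explicitly in the statement's proof. I would also note that the certificate is conditional on the Bernoulli sampling model above and is not a worst-case guarantee, consistent with the explanatory framing of \Cref{sec:theory}.
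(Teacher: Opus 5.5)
Your proposal is correct and follows the paper's route. The paper likewise treats the $0/150$ count as an empirical outcome of the locked, benign-only calibration (disjoint from indices 50--199) and simply asserts the Wilson bound, which your closed form $z^2/(n+z^2)=3.8416/153.8416\approx 0.02497$ verifies explicitly.

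One correction to your sensitivity remark, which is wrong as stated: at $n=150$ the bound stays below $2.5\%$ for every $z$ up to about $1.9611$, and the exact quantile $1.95996$ only lowers it. The tight parameter is therefore $n$, not $z$: already at $n=149$ the bound is about $2.51\%$.
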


This certificate is empirical, not derived from the isotropic null
model of Proposition~\ref{prop:fpr_decay}.
In the broader clean-task profile, \rg{} also achieves ${\leq}1\%$
FPR on HumanEval ($N{=}100$), 0\% on GSM8K ($N{=}100$), and
0\% on CNN/DailyMail ($N{=}100$).
Those rows test clean-workload behaviour rather than extending the
independent SQuAD certificate, because the long-form tasks also
contribute development examples to mixed-benign calibration.
The pattern is still consistent with the prediction that TP does
not fire when $r_t$ is low.

%% file: sections/06_setup.tex

\section{\heading{Experimental Setup}}
\label{sec:setup}

\subsection{\heading{Models}}
\label{sec:models}

We evaluated four open-weight models and two closed-API models.
Open-weight experiments run on NVIDIA A100 GPUs using
HuggingFace Transformers~\cite{wolf2020transformers} and
SentenceTransformers~\cite{reimers2019sbert}.
Closed-API experiments access the Anthropic Messages API with
extended thinking enabled.
Anthropic documents that extended thinking can expose varying levels
of transparency, including summarised thinking blocks~\cite{anthropic2026thinking};
the evaluated Claude models are documented in Anthropic's system-card
index~\cite{anthropic2026systemcards}.
Table~\ref{tab:models} summarises all the models.

\begin{table}[H]
\centering
\caption{Models evaluated.
Trace visibility: whether the reasoning trace is accessible to the defender.}
\label{tab:models}
\footnotesize
\setlength{\tabcolsep}{2pt}
\renewcommand{\arraystretch}{1.05}
\begin{tabularx}{\columnwidth}{@{}>{\raggedright\arraybackslash}Xlcc@{}}
\toprule
Model & Type & Reasoning & Trace visibility \\
\midrule
DS-R1-Qwen-7B~\cite{deepseek2025r1}    & Open primary & Yes & Full \\
DS-R1-Llama-8B~\cite{deepseek2025r1}   & Open         & Yes & Full \\
Qwen3-8B~\cite{qwen2025tech}           & Open         & Yes & Full \\
Llama-3.1-8B~\cite{meta2024llama3}     & Open non-reas. & No  & N/A  \\
\midrule
Sonnet~4.5\textsuperscript{a}          & Closed API   & Yes & Summary \\
Opus~4.7\textsuperscript{b}            & Closed API   & Yes  & Hidden \\
\bottomrule
\multicolumn{4}{@{}p{\columnwidth}@{}}{\footnotesize
  Full = exposed reasoning trace.
  \textsuperscript{a}API ID: \texttt{claude-sonnet-4-5-20250929}.
  Thinking blocks are provider-summarised, not raw reasoning.}\\
\multicolumn{4}{@{}p{\columnwidth}@{}}{\footnotesize
  \textsuperscript{b}Thinking blocks were empty in all our runs;
  \rg{} is inapplicable for Opus.}
\end{tabularx}
\end{table}

\noindent\textbf{Primary model.}
DeepSeek-R1-Distill-Qwen-7B (\emph{ds-r1-qwen-7b}) is used for the
complete evaluation: main-attack characterisation, \rg{} ablation,
adaptive experiments, and baseline comparison.
It is chosen because it generates long, accessible reasoning traces
and shows clear liveness failure under both primary attacks.

\noindent\textbf{Cross-model validation.}
The three additional open-weight models validate that attack patterns
and detection logic generalise across model families.
Llama-3.1-8B-Instruct is a non-reasoning model included to establish
the boundary of \rg{}'s applicability: without extended reasoning
text the three signals are undefined.

\noindent\textbf{Closed-API evaluation.}
Sonnet~4.5 is evaluated with $N{=}30$ attacked and $N{=}30$ clean
queries.
Its visibility is Regime~B, and the attacked outputs exhibit O2
slowdown: all queries produce answers at $21.3\times$ token
amplification.
Opus~4.7 is a \emph{qualitative smoke test only} ($N{=}3$).
No statistical claims are drawn from the Opus results
(\Cref{sec:closed_api_protocol,sec:deployment}).

\subsection{\heading{Datasets and Splits}}
\label{sec:splits}

\textbf{SQuAD}~\cite{rajpurkar2016squad} is the main evaluation
dataset.
We apply a fixed shuffle (\texttt{seed=42}) to the SQuAD validation
split and allocate examples to the role-specific ranges in
Table~\ref{tab:splits}.
The adaptive extension deliberately overlaps indices 255--299 with
the held-out attack range, as described below.
Table~\ref{tab:splits} lists the complete allocation.

\begin{table}[t]
\centering
\caption{SQuAD validation split allocation (shuffled, seed\,=\,42).
Adaptive extension partially overlaps with the attack test range;
see text for details.}
\label{tab:splits}
\footnotesize
\setlength{\tabcolsep}{3pt}
\begin{tabularx}{\columnwidth}{@{}llrX@{}}
\toprule
Split & Indices & $N$ & Purpose \\
\midrule
Calibration          & 0--49   &  50 & Threshold selection only \\
Held-out benign      & 50--199 & 150 & Main FPR evaluation \\
Held-out attack      & 200--299 & 100 & TPR tests \\
Adaptive extension   & 255--354 & 100 & C1/C4 tests \\
\bottomrule
\end{tabularx}
\end{table}

Three details matter for interpreting this split.

\textit{Calibration split and FPR certificate.}
Indices 0--49 are used only to select detector thresholds.
They are not attack test data, and they are not counted as
independent held-out FPR evidence.
The primary held-out SQuAD FPR certificate is measured on indices
50--199 ($N{=}150$), where \rg{} produces 0 false positives
(\Cref{prop:certificate}).

\textit{Attack and adaptive overlap.}
Indices 255--299 appear in both the held-out attack test (200--299)
and the adaptive extension (255--354).
The same SQuAD questions appear in both sets, but the attack prompt
is different.
No OverThink result was used to design C1 or C4.

\textit{ExtendAttack indices.}
ExtendAttack uses indices 205--254 ($N{=}50$ per model),
drawn from the held-out attack test range.

\medskip
\noindent\textbf{Long-form benign benchmarks.}
For \textbf{HumanEval}~\cite{chen2021humaneval},
\textbf{GSM8K}~\cite{cobbe2021gsm8k}, and
\textbf{CNN/DailyMail}~\cite{see2017cnn,hermann2015cnn}, each clean-task profile
contains 100 examples.
For \rg{} calibration, 30 examples per task are used as
mixed-benign development data and the remaining 70 examples per
task are held out in the artifact.
The main table reports the full 100-example clean-task profile so
all defences are compared on the same workload.
These benchmarks contain clean examples only; the independent
statistical FPR certificate is the SQuAD $0/150$ result in
\Cref{prop:certificate}.

\noindent\textbf{TriviaQA.}
TriviaQA~\cite{joshi2017triviaqa} ($N{=}25$ clean, $N{=}25$ attacked)
provides cross-dataset validation reported in the supplementary artifact.

\subsection{\heading{Attack Families}}
\label{sec:attack_setup}

We evaluate \rg{} and all baselines against two previously
proposed reasoning-chain consumption attacks and two scaled
adaptive variants.

\textbf{OverThink}~\cite{kumar2025overthink} injects a Sudoku-puzzle
decoy with an explicit instruction to solve it before answering.
We reproduce this attack under our own models and evaluation setting
with $N{=}100$ attacked queries per open model (indices 200--299).

\textbf{ExtendAttack}~\cite{extendattack2025} encodes the user's
question in mixed-base ASCII and instructs the model to decode it
character-by-character.
We evaluate it with $N{=}50$ per open model (indices 205--254).

\textbf{Adaptive variants C1: Topical CSP and C4: Task-aligned filler}~(\Cref{sec:adaptive}) are
evaluated with $N{=}100$ each (indices 255--354) on
DS-R1-Qwen-7B only.
Two further variants, C2: High-entropy filler and
C3: Structured-diverse filler, were smoke-tested at $N{=}3$;
both produced $\leq 1.5\times$ amplification and are not scaled.

\subsection{\heading{Defenses}}
\label{sec:defenses_setup}

We evaluate six main defences in the primary benchmark:
\rg{}, \qdm{}, and B1--B4.
We additionally evaluate Llama Guard~3-1B-INT4 as an input-side
safety classification baseline.
Within the primary benchmark, \rg{} is our main proposed monitor;
\qdm{} is a lightweight output-side companion monitor we implement
to cover closed-API settings where reasoning traces are inaccessible;
B1--B4 are reference baselines.

\textbf{B1: Token Budget.}
Flags outputs whose token count exceeds the 99th percentile of
clean calibration outputs.

\textbf{B2: Compression-Ratio Proxy.}
Uses zlib compression ratio as a proxy for output repetitiveness:
highly repetitive outputs compress more, indicating low semantic
diversity.
We use this proxy rather than true model perplexity because
closed APIs and saved text outputs do not expose token-level logits.

\textbf{B3: Answer-Absence Proxy.}
Flags outputs with fewer than 5 answer tokens as liveness failures,
loosely inspired by answer-convergence monitoring~\cite{liu2025convergence}.

\textbf{B4: Output-Length Anomaly.}
Flags outputs whose token count has a z-score $>3$ relative to the
calibration mean.

\textbf{\qdm{} (Query Drift Monitor).}
Computes the mean cosine similarity between 80-word output chunks
and the user query using all-MiniLM-L6-v2; flags output if
similarity falls below $\tau_{\qdm}{=}0.318$ (full description in
\Cref{sec:qdm}).
\qdm{} is a lightweight companion monitor we implement for
deployment scenarios where the reasoning trace is inaccessible.
It is not claimed as a primary novel contribution.
Unlike embedding-drift detectors that compare two input
representations~\cite{zedd2025}, \qdm{} compares the user query
against the model's \emph{output}, making it a post-hoc
liveness monitor rather than an input-side filter.

\textbf{\rg{}.}
Our main proposed method (\Cref{sec:method}).
Processes the exposed reasoning trace in 64-word chunks, computes
three signals (RR, VG, TP), and raises a sustained alarm when all
three calibrated alarm conditions hold for $S{=}3$ consecutive chunks.
Requires non-empty reasoning text; inapplicable when the API hides
or returns empty thinking blocks.
Unlike all other defences listed here, \rg{} enables \emph{online
early stopping}: generation can be halted at the trigger chunk,
saving about 92\% of tokens per detected attack.

\textbf{Llama Guard~3-1B-INT4}~\cite{llamaguard3} ($N{=}50$ attack,
$N{=}50$ clean) is evaluated as an input-side safety classification
baseline (\Cref{sec:deployment}) to establish whether standard
content moderation can intercept this attack class at the input stage.

Table~\ref{tab:defense_comparison} summarises the capability
differences between the two main monitors.

\begin{table}[t]
\centering
\caption{Visibility-specific capability comparison: \rg{} vs.\ \qdm{}.}
\label{tab:defense_comparison}
\footnotesize
\setlength{\tabcolsep}{3pt}
\renewcommand{\arraystretch}{1.05}
\begin{tabularx}{\columnwidth}{@{}lcccXX@{}}
\toprule
Monitor & Needs reas. & Online & Early stop & Final-output only & Summ. thinking \\
\midrule
\rg{}  & Yes & Yes & Yes & No  & Partial \\
\qdm{} & No  & No  & No  & Yes & Yes \\
\bottomrule
\end{tabularx}
\end{table}

\subsection{\heading{RecurGuard Configuration}}
\label{sec:rg_config}

\rg{} uses one fixed configuration throughout the evaluation.
Chunk size $k{=}64$~words; window $W{=}20$~chunks;
inner similarity threshold $\taur{=}0.70$.
Calibrated alarm thresholds:
$\tauR{=}0.20$, $\tauV{=}0.01$, $\tauP{=}{-0.738}$;
sustained alarm count $S{=}3$; minimum chunk count $c_{\min}{=}3$.
Sentence encoder: \texttt{all-MiniLM-L6-v2} (384 dimensions).
The same locked configuration is used for every test evaluation.

\subsection{\heading{Calibration Procedure}}
\label{sec:calibration_proc}

\textbf{\rg{} thresholds} are calibrated on benign data only.
No attacked samples appear in the \rg{} calibration set.

We use a \emph{mixed-benign calibration set}: 50 SQuAD traces
(indices 0--49) plus 30 benign long-form traces from each of
HumanEval, GSM8K, and CNN/DailyMail.
Percentile statistics over this set seed candidate threshold grids;
the final configuration is selected by grid search as the most
sensitive setting achieving zero false positives on the complete
mixed-benign set.
Calibrating on mixed task types prevents the configuration from
overfitting to short QA outputs and supports near-zero FPR on
legitimate long-form reasoning.

\textbf{\qdm{} threshold.}
$\tau_{\qdm}{=}0.318$ is chosen from a development comparison of
clean and attacked outputs.
\qdm{} is treated as a companion monitor, not the primary calibrated
detector.
Because this threshold uses attacked development outputs, \qdm{} is
not a benign-only calibrated detector like \rg{}.
The attacked development outputs are OverThink generations on the
SQuAD calibration questions (indices 0--49), disjoint from the
OverThink, ExtendAttack, and C1/C4 test rows.
All thresholds are locked after calibration and never retuned.

\subsection{\heading{Metrics}}
\label{sec:metrics}

\textbf{Token amplification}: ratio of mean attacked output tokens
to mean clean output tokens.

\textbf{Liveness failure rate}: fraction of attacked queries in
which the model exhausts the generation budget without producing
a final answer (fewer than 5 answer tokens, O3).

\textbf{True positive rate (TPR)}: fraction of attacked queries
correctly flagged.

\textbf{False positive rate (FPR)}: fraction of clean queries
incorrectly flagged.

\textbf{Joint miss rate}: fraction of attacked queries missed by
\emph{both} \rg{} and \qdm{} simultaneously.

Wilson 95\% confidence intervals are reported for primary statistical
claims and small-sample detection results~\cite{wilson1927probable}.
Paired comparison between \qdm{} and \rg{} on the same samples
uses the \textbf{McNemar test}~\cite{mcnemar1947sampling}.

\subsection{\heading{Closed-API Evaluation Protocol}}
\label{sec:closed_api_protocol}

\textbf{Sonnet~4.5.}
Evaluated with $N{=}30$ clean and $N{=}30$ attacked queries.
We report Wilson confidence intervals for this sample size, but
treat the result as a limited closed-API evaluation rather than a
full model family claim.
Thinking blocks are accessible but contain summarised reasoning;
\rg{} is applicable with the reduced-sensitivity caveat noted in
\Cref{sec:visibility,sec:models}.

\textbf{Opus~4.7.}
Evaluated with $N{=}3$ attacked queries as a qualitative smoke test.
We do not report confidence intervals or draw statistical claims
from Opus.
The API returned empty thinking blocks for all samples;
\rg{} is structurally inapplicable regardless of $N$.
Additional samples would test the same visibility limitation rather
than the detector itself.

\noindent\textbf{Artifact and hardware.}
Open-weight experiments were run on NVIDIA A100 GPUs.
The complete artifact is available at
\url{https://github.com/abidaziz1/recurguard}.

%% file: sections/07_results.tex

\section{\heading{Evaluation Results}}
\label{sec:results}

\subsection{\heading{Attack Amplification}}
\label{sec:results_amplification}

Table~\ref{tab:amplification} reports token amplification and
liveness failure rates across all four open models for both primary
attack families.
OverThink drives all three reasoning-specialist models into
O3 liveness failure on 86--100\% of queries, with 22--24$\times$
token amplification.
ExtendAttack produces comparable amplification (19--23$\times$) via
a structurally different mechanism: mixed-base ASCII decoding rather
than puzzle injection.
The comparable amplification of two mechanistically distinct attacks
suggests that liveness failure is not tied to a single decoy
construction.

Llama-3.1-8B-Instruct shows a nominal amplification of
$402\times$ under OverThink, but this is a denominator artefact:
its clean baseline is only 14 tokens (the model answers without
generating extended reasoning), making any non-trivial attacked
output appear as extreme amplification.
We do not interpret this as a stronger attack on Llama; it reflects
the absence of a reasoning trace rather than a meaningful
consumption threat.

\input{tables/tab_attack_severity}

\subsection{\heading{Primary Attack Detection}}
\label{sec:results_detection}

Table~\ref{tab:benchmark} (upper) shows true positive rates for all
six defences on the two primary attacks.
\input{tables/tab_main_benchmark}
On both OverThink and ExtendAttack, B1 and B4 achieve 100\% TPR
because both attacks produce outputs that far exceed the
token-budget and length-anomaly thresholds calibrated on clean data.
For these primary attacks, length alone is a reliable signal.

\rg{} achieves 99\% TPR on OverThink and 92\% on ExtendAttack.
\qdm{} achieves 100\% and 96\% respectively.
A McNemar test confirms that \rg{} and \qdm{} agree on 99/100
OverThink samples ($\chi^2{=}0.0$, $p{=}1.0$): the two methods
are not statistically distinguishable in TPR on primary attacks.
The sharper difference appears in false positives and deployment
behaviour, described next.

\subsection{\heading{False Positive Rate Across Task Types}}
\label{sec:results_fpr}

Clean SQuAD is not the hard case.
All six defences stay at ${\leq}1\%$ FPR on question answering.
The harder test is whether a detector can leave normal long-form
answers alone.

HumanEval exposes the problem with length rules.
B1 flags 69\% of clean code answers and B4 flags 80\%.
Those are not malicious outputs; they are ordinary code-generation
answers that happen to be long and structured.
A threshold calibrated on short QA answers is too blunt for this
workload.

CNN/DailyMail exposes a different weakness.
\qdm{} flags 89\% of clean summaries because its anchor is the
short instruction, ``Summarise this article.''
The summary mostly discusses the article, so its chunks can sit far
from the instruction vector even when the answer is correct.

\rg{} avoids both failure modes in this evaluation.
It produces 0\% FPR on the held-out SQuAD split ($N{=}150$),
1\% on HumanEval ($N{=}100$), 0\% on GSM8K ($N{=}100$), and
0\% on CNN/DailyMail ($N{=}100$).
The SQuAD result is the independent certificate: 0 false positives
on indices 50--199, with Wilson 95\% upper bound $<2.5\%$.
The long-form rows are clean-workload checks, not a second
independent certificate, because those tasks also provide development
examples for mixed-benign calibration.

The summarisation result checks the TP signal directly.
A clean summary may be far from the query, but it keeps introducing
new article content, so recurrence to prior chunks stays low.
That is the distinction behind \Cref{thm:tp_separation}: \rg{}
should not fire merely because the answer is semantically far from
a short instruction.

\Cref{fig:combined_results} gives a compact view of the detector behaviour
and the adaptive boundary. The left panel illustrates how RR, VG, and TP move
into the sustained joint-alarm region; the right panel shows that full
semantic evasion is possible only after attack amplification drops sharply.

\begin{figure*}[!t]
\centering
\includegraphics[width=\textwidth]{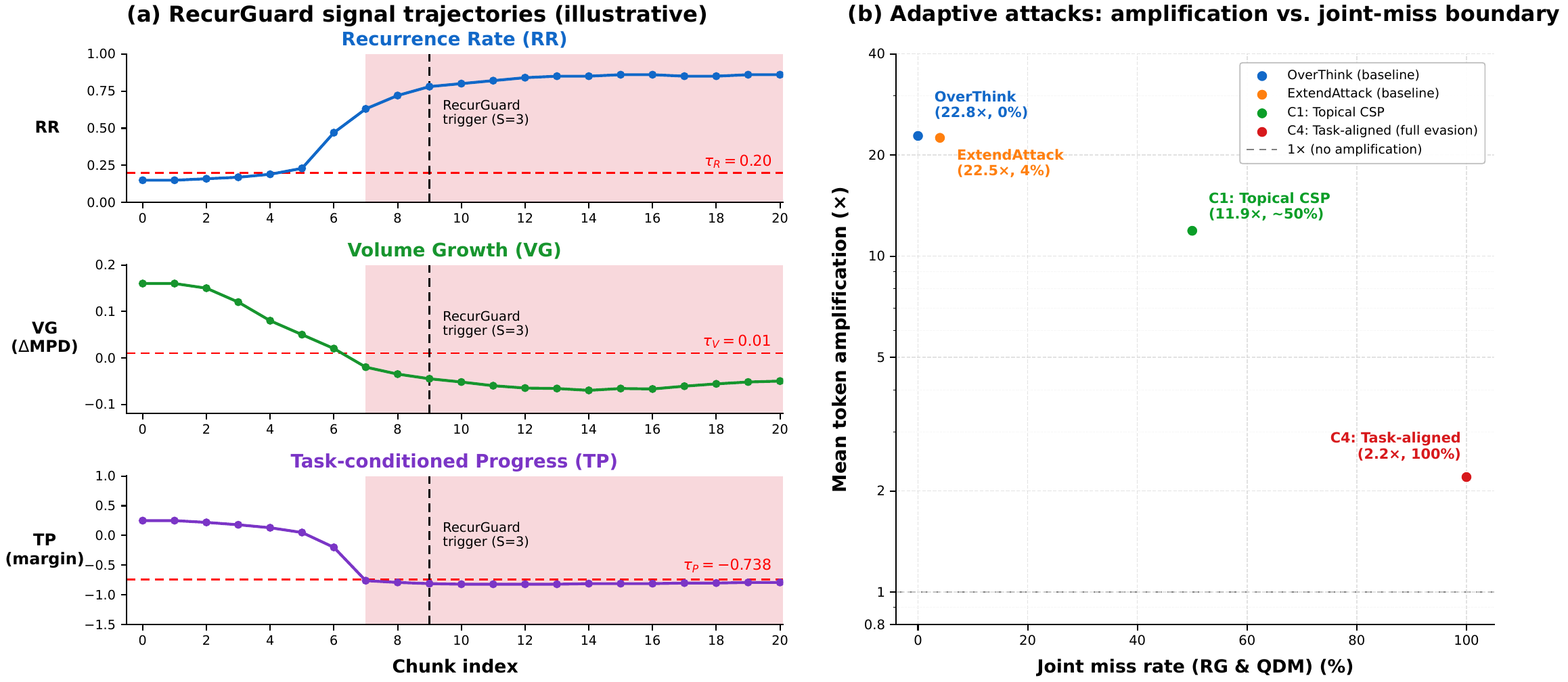}
\caption{Combined detector view. Left: illustrative \rg{} signal
trajectories showing recurrence growth, reduced volume growth, and
negative task-conditioned progress before the sustained $S{=}3$ alarm.
Right: adaptive amplification versus joint miss rate on DS-R1-Qwen-7B.
C1 remains the middle case, while C4 fully evades the semantic monitors
only after amplification drops to $2.2\times$.}
\label{fig:combined_results}
\end{figure*}

\subsection{\heading{Baseline Defence Comparison}}
\label{sec:results_baselines}

The adaptive rows are where the baselines separate.

\textbf{Length checks still catch C1.}
B1 detects 94\% of C1 attacks and B4 detects 99\%.
C1 stays expensive even after the decoy is made topical, so the
output is still long.
The same rules barely catch C4 (B1: 3\%, B4: 10\%) because C4
brings amplification down to $2.2\times$.
The weak point is deployment: B1 and B4 also flag 69\% and 80\% of
clean HumanEval answers.
That false-positive rate is too high for a mixed workload.

\textbf{B2 fails once the output stops looking repetitive.}
The compression-ratio proxy is strong on OverThink and ExtendAttack,
but drops to 3\% TPR on C1 and 1\% on C4.
The saved outputs explain why.
C1 has mean compression ratio 0.206 and C4 has mean ratio 0.386,
both above the flag threshold.
The adaptive prompts diversify the text enough to remove the
compression signature.

\textbf{B3 catches only missing answers.}
The answer-absence proxy reaches 31\% TPR on C1 because some C1
runs exhaust the budget before producing a final answer.
It reaches only 1\% on C4.
That is expected: C4 usually produces a short answer, so there is
no answer-absence signal to detect.

\Cref{sec:adaptive} uses these failures to draw the defence
taxonomy and security boundary.

\subsection{\heading{Signal Ablation}}
\label{sec:ablation}

\input{tables/tab_ablation}

\Cref{tab:ablation} reports the seven signal configurations tested on
OverThink.
The ablation is simple: TP carries most of the detector's precision.
On OverThink ($N{=}100$, DS-R1-Qwen-7B), TP alone matches the full
model at 99\% TPR and 0\% FPR.
RR alone and VG alone catch all attacks, but they also false-alarm
on clean data: 4.7\% FPR for RR and 5.3\% for VG.
Every configuration with TP reaches 0\% FPR.
Every configuration without TP has elevated FPR.

The reason is the same one used in the theory section.
RR and VG can tell when reasoning is becoming repetitive or no
longer expanding, but they do not know whether the repetition is
off-task.
TP adds that missing comparison to the query.
That is why it can reject benign long-form reasoning while still
catching decoy collapse.

The FPR certificate remains the SQuAD held-out test:
0/150 false positives on indices 50--199, with Wilson 95\% upper
bound $<2.5\%$.
The HumanEval, GSM8K, and CNN/DailyMail rows ($N{=}100$ each) show
the broader clean-task profile; \rg{} stays at ${\leq}1\%$ FPR on
all three.

\subsection{\heading{Closed-API Evaluation}}
\label{sec:results_closed}

\input{tables/tab_closed_api}
\input{tables/tab_adaptive}

The closed-API results need a separate reading because the defender
does not see the same evidence as in the open-weight model runs.

\textbf{Sonnet~4.5} ($N{=}30$ attacked, $N{=}30$ clean) has
visibility Regime~B and outcome O2: all 30 attacked queries
produce a final answer, but at $21.3\times$ token amplification.
\qdm{} detects 100\% of attacked queries [95\% CI: 88.6--100\%],
with 0\% FPR on clean queries.
\rg{} detects 16.7\% [7.3--33.6\%], attributable to Sonnet's
provider-summarised thinking blocks rather than raw reasoning text:
summarisation compresses the trace and dampens chunk-level
recurrence signals.
For closed-API deployments with partial-visibility thinking blocks,
\qdm{} is the more reliable monitor.

\textbf{Opus~4.7} ($N{=}3$, qualitative only) has visibility
Regime~C and outcome O3: all three attacked queries exhaust the
generation budget and produce no answer, with $676\times$ token
amplification.
The cost implication of this amplification is analysed in
\Cref{sec:deployment}.
\qdm{} flags all three samples (3/3).
Because $N{=}3$ is a qualitative smoke test, we do not report a
confidence interval or draw statistical claims from Opus.
\rg{} is inapplicable: the API returned empty thinking blocks for
all attacked queries.
These observations establish only that the API visibility boundary
makes \rg{}
structurally inapplicable when thinking blocks are hidden.

The closed-API results confirm the deployment boundary stated in
\Cref{sec:method}: when reasoning is summarised, \rg{} can run with
reduced sensitivity; when reasoning is hidden, \rg{} is structurally
inapplicable.
\qdm{} remains the post-hoc fallback in both cases.

%% file: tables/tab_attack_severity.tex

\begin{table*}[!t]
\centering
\caption{Attack amplification across models.
         OT\,=\,OverThink; EA\,=\,ExtendAttack; LF\,=\,liveness failure (O3).}
\label{tab:amplification}
\setlength{\tabcolsep}{5pt}
\begin{tabular}{@{}lrrrccc@{}}
\toprule
\multirow{2}{*}{Model}
  & \multicolumn{2}{c}{Mean tokens}
  & \multicolumn{2}{c}{OverThink}
  & \multicolumn{2}{c}{ExtendAttack} \\
\cmidrule(lr){2-3}\cmidrule(lr){4-5}\cmidrule(lr){6-7}
  & Clean & OT & Amp. & LF & Amp. & LF \\
\midrule
DS-R1-Qwen-7B   & 357 & 8{,}155 & 22.8$\times$ & 98\% & 22.5$\times$ & 92\% \\
DS-R1-Llama-8B  & 344 & 8{,}016 & 23.3$\times$ & 86\% & 22.7$\times$ & 80\% \\
Qwen3-8B        & 337 & 8{,}192 & 24.3$\times$ & 100\%& 19.6$\times$ & 44\% \\
Llama-3.1-8B$^\dagger$ & 14  & 5{,}701 & 402$\times$ & 100\%& 212$\times$ & 100\%\\
\bottomrule
\multicolumn{7}{@{}l@{}}{\footnotesize
  $^\dagger$Non-reasoning model; 402$\times$ is a denominator artefact
  (14-token clean baseline).}\\
\multicolumn{7}{@{}l@{}}{\footnotesize
  Mean token counts are rounded; amplification is computed from
  unrounded means.}
\end{tabular}
\end{table*}

%% file: tables/tab_main_benchmark.tex

\begin{table*}[!t]
\centering
\caption{Defence benchmark on DS-R1-Qwen-7B.
         Upper block: TPR; lower block: FPR.
         \textcolor{red}{Red} marks FPR $>$10\%.}
\label{tab:benchmark}
\setlength{\tabcolsep}{7pt}
\begin{tabular}{@{}lrrrrrr@{}}
\toprule
& \multicolumn{6}{c}{Defence} \\
\cmidrule(l){2-7}
Attack / Clean task & B1 & B2 & B3 & B4 & QDM & \rg{} \\
\midrule
\multicolumn{7}{@{}l@{}}
  {\small\textit{True positive rate (TPR): higher is better}} \\[2pt]
OverThink ($N$\,=\,100) &
  \textbf{100\%} & \textbf{100\%} & 98\% & \textbf{100\%} &
  \textbf{100\%} & 99\% \\
ExtendAttack ($N$\,=\,50) &
  \textbf{100\%} & 96\% & 92\% & \textbf{100\%} & 96\% & 92\% \\
C1: Topical CSP ($N$\,=\,100) &
  94\% & 3\%\,$^\star$ & 31\% & \textbf{99\%} & 38\% & 21\% \\
C4: Task-aligned filler ($N$\,=\,100)$^\ddagger$ &
  3\% & 1\%\,$^\star$ & 1\% & \textbf{10\%} & 0\% & 0\% \\
\midrule
\multicolumn{7}{@{}l@{}}
  {\small\textit{False positive rate (FPR): lower is better}} \\[2pt]
SQuAD QA full profile ($N$\,=\,200; incl.\ 50 calib.) &
  1\% & 1\% & 0.5\% & 1\% & 0.5\% & \textbf{0\%} \\
HumanEval code ($N$\,=\,100) &
  \textcolor{red}{69\%} & \textcolor{red}{27\%} &
  \textcolor{red}{17\%} & \textcolor{red}{80\%} &
  \textbf{1\%} & \textbf{1\%} \\
GSM8K math ($N$\,=\,100) &
  \textbf{0\%} & \textbf{0\%} & \textbf{0\%} & \textbf{0\%} &
  \textbf{0\%} & \textbf{0\%} \\
CNN/DM summary ($N$\,=\,100) &
  \textbf{0\%} & 2\% & \textbf{0\%} & 1\% &
  \textcolor{red}{89\%} & \textbf{0\%} \\
\bottomrule
\multicolumn{7}{@{}l@{}}{\footnotesize
  B1\,=\,Token Budget; B2\,=\,Compression-Ratio Proxy;
  B3\,=\,Answer-Absence Proxy; B4\,=\,Output-Length Anomaly.}\\
\multicolumn{7}{@{}l@{}}{\footnotesize
  $^\star$B2 threshold is a fallback value; adaptive B2 rows should be
  read with caution. $^\ddagger$C4 full evasion at 2.2$\times$.}\\
\multicolumn{7}{@{}l@{}}{\footnotesize
  \rg{} is the only defence with near-zero FPR across all four clean
  task types.}\\
\multicolumn{7}{@{}l@{}}{\footnotesize
  \rg{} held-out certificate: 0/150 false positives on SQuAD indices
  50--199 (\Cref{prop:certificate}).}\\
\multicolumn{7}{@{}l@{}}{\footnotesize
  Long-form FPR rows report the full 100-example clean-task profile;
  30 examples per long-form task also support mixed-benign calibration.}
\end{tabular}
\end{table*}

%% file: tables/tab_ablation.tex

\begin{table}[!t]
\centering
\caption{Signal ablation on OverThink ($N$\,=\,100, DS-R1-Qwen-7B).
         \checkmark\,=\,signal active; --\,=\,disabled.
         TP is the dominant signal; any configuration including TP
         achieves 0\% FPR.}
\label{tab:ablation}
\setlength{\tabcolsep}{6pt}
\begin{tabular}{@{}lccccc@{}}
\toprule
Configuration & RR & VG & TP & TPR & FPR \\
\midrule
RR-only       & \checkmark & -- & -- & 100\% & 4.7\% \\
VG-only       & -- & \checkmark & -- & 100\% & 5.3\% \\
\textbf{TP-only} & -- & -- & \checkmark & \textbf{99\%} & \textbf{0\%} \\
\midrule
RR\,+\,VG     & \checkmark & \checkmark & -- & 100\% & 4.0\% \\
RR\,+\,TP     & \checkmark & -- & \checkmark & 99\% & \textbf{0\%} \\
VG\,+\,TP     & -- & \checkmark & \checkmark & 99\% & \textbf{0\%} \\
\textbf{Full (all)} & \checkmark & \checkmark & \checkmark
             & \textbf{99\%} & \textbf{0\%} \\
\bottomrule
\end{tabular}
\end{table}

%% file: tables/tab_closed_api.tex

\begin{table}[!t]
\centering
\caption{Closed-API evaluation.
         Sonnet~4.5: $N{=}30$ attacked, $N{=}30$ clean.
         Opus~4.7: $N{=}3$ qualitative smoke test only; no
         statistical claims.
         Vis.\,=\,visibility regime; Out.\,=\,attack outcome
         (O2 slowdown, O3 liveness failure).
         CI\,=\,95\% Wilson interval.
         RG\,=\,\rg{}; ``N/A''\,=\,not applicable (empty thinking blocks).}
\label{tab:closed}
\scriptsize
\setlength{\tabcolsep}{2pt}
\renewcommand{\arraystretch}{1.08}
\begin{tabular}{@{}lccrcc@{}}
\toprule
Model & Vis. & Out. & Amp. & QDM TPR & RG TPR \\
\midrule
Sonnet~4.5 &
  B &
  O2 &
  21.3$\times$ &
  \textbf{100\%}\;[88.6, 100] &
  16.7\%\;[7.3, 33.6] \\
Opus~4.7 &
  C &
  O3 &
  676$\times$ &
  100\% (3/3) &
  N/A \\
\bottomrule
\end{tabular}
\end{table}

%% file: tables/tab_adaptive.tex

\begin{table*}[!t]
\centering
\caption{Adaptive adversary results (DS-R1-Qwen-7B unless noted).
         Wilson confidence intervals are shown in brackets where reported.
         Joint miss = neither monitor detected the attack,
         Amp.\ = mean token amplification.}
\label{tab:adaptive}
\setlength{\tabcolsep}{4.5pt}
\renewcommand{\arraystretch}{1.08}
\begin{tabular}{@{}lrrrrr@{}}
\toprule
Attack & $N$ & Amp. & RG TPR & QDM TPR & Joint miss \\
\midrule
OverThink (baseline)    & 100 & 22.8$\times$ &
  99\%\;[94.6, 99.8] & 100\% & 0\% \\
ExtendAttack (baseline) &  50 & 22.5$\times$ &
  92\%\;[81.2, 96.8] & 96\% & 4\% \\
\midrule
C1: Topical CSP         & 100 & 11.9$\times$ &
  21\%\;[14.2, 30.0] & 38\%\;[29.1, 47.8] & 50\%\,$^\dagger$ \\
C4: Task-aligned filler & 100 & \textbf{2.2$\times$} &
  0\%\;[0, 3.7]      & 0\%\;[0, 3.7]      & 100\% \\
\midrule
C2: High-entropy filler$^\S$ & 3 & $\leq$1.5$\times$ & n/a & n/a & n/a \\
C3: Structured-diverse filler$^\S$ & 3 & $\leq$1.5$\times$ & n/a & n/a & n/a \\
\bottomrule
\multicolumn{6}{@{}l@{}}{\footnotesize
  $^\S$Smoke test only; model processed both decoys and answered correctly.}\\
\multicolumn{6}{@{}l@{}}{\footnotesize
  $^\dagger$Rounded locked adaptive run-level result.}\\
\multicolumn{6}{@{}l@{}}{\footnotesize
  C4 full evasion at 2.2$\times$ supports \Cref{thm:evasion}
  (Evasion-Alignment) in our evaluated setting.}\\
\multicolumn{6}{@{}l@{}}{\footnotesize
  C1 partial evasion is outside the full-evasion scope of
  \Cref{thm:evasion}.}
\end{tabular}
\end{table*}

%% file: sections/08_adaptive.tex

\section{\heading{Adaptive Adversary Analysis}}
\label{sec:adaptive}

Table~\ref{tab:adaptive} reports the adaptive variants.
C2: High-entropy filler and C3: Structured-diverse filler are
included as negative controls: both produced ${\leq}1.5\times$
amplification in smoke testing ($N{=}3$) and were not scaled.
The analysis below focuses on C1 and C4, the two scaled variants.

\subsection{\heading{C1: Topical CSP Weakens Semantic Monitors}}
\label{sec:c1_analysis}

C1 uses vocabulary drawn from the user's question domain as
the decoy, so generated reasoning stays semantically closer to
the query than unrelated Sudoku symbols.
C1 raises query similarity ($a_t$), reducing the magnitude of TP
collapse even when recurrence remains present.
The result is $\approx$50\% joint miss at $11.9\times$
amplification (RG TPR: 21\%, QDM TPR: 38\%).

C1 is nonetheless detectable by length-based defences: B1
catches 94\% and B4 catches 99\% of C1 attacks because the
output remains long regardless of semantic adaptation.
Topical diversification does not reduce output token count.
However, B1 and B4 cannot substitute for semantic monitors
in mixed-workload deployments: their 69\% and 80\% FPR on
HumanEval code generation make them unsuitable as standalone
defences.

C1 is not explained by \Cref{thm:evasion}, which applies
to full TP evasion.
C1 achieves partial evasion: the TP alarm fires on some chunks
but not enough to sustain $S{=}3$ consecutive alarms on most
samples.
It occupies the open middle regime between full evasion and
full detection.

\subsection{\heading{C4: Task-Aligned Filler Requires Low Amplification}}
\label{sec:c4_analysis}

C4 keeps every reasoning chunk task-aligned, evading both
\rg{} and \qdm{} across all 100 samples (0\% TPR each).
The amplification collapses from $22.8\times$ (OverThink) to
$2.2\times$, a 90\% reduction.

This is the experimental support for \Cref{thm:evasion}.
To avoid triggering the TP alarm, the attack must satisfy
$a_t \geq r_t + \tauP$ at every chunk.
In practice, this forces the reasoning to stay close to the
user's question, which causes the model to converge on an
answer rather than loop indefinitely.
C4 supports the theorem's prediction in our evaluated setting:
in our C4 construction, full semantic evasion coincides with
sharply reduced amplification.

C4 is not harmless.
A consistent $2.2\times$ cost overhead is still elevated
API spend.
But it changes the risk profile from a severe denial-of-wallet
threat to a low-amplification slowdown, and cost details are
in \Cref{sec:deployment}.

\subsection{\heading{Security Boundary}}
\label{sec:security_boundary_adaptive}

The adaptive boundary is \emph{amplification under evasion}.

The evaluated high-amplification decoy attacks are detectable.
Full semantic evasion is possible, but in our C4 construction
it reduces the attack to a low-amplification slowdown.
C1 occupies the open middle regime: partial evasion at moderate
amplification.

The failure-mode taxonomy across all defences is:
\begin{itemize}
\item \textbf{B1/B4}: effective against high-amplification
  attacks; unsuitable as standalone monitors on long-form tasks.
\item \textbf{B2}: effective against repetitive attacks;
  destroyed by semantic diversification.
\item \textbf{B3}: catches attacks that produce no answer;
  misses attacks that produce a short one.
\item \textbf{\qdm{}}: effective when output is off-topic;
  weakened by topical alignment and unsuitable for summarisation.
\item \textbf{\rg{}}: effective against recurrence collapse;
  weakened by task-conditioned alignment (C1 open boundary).
\end{itemize}

No single defence works across every case.
\rg{} has the best clean-task FPR profile in our evaluation.
C1 remains the practical open boundary.

%% file: sections/09_deployment.tex

\section{\heading{Deployment Analysis}}
\label{sec:deployment}

\subsection{\heading{Input-Side Safety Classification: Llama Guard~3-1B-INT4}}

Llama Guard~3-1B-INT4~\cite{llamaguard3} ($N{=}50$ attack, $N{=}50$ clean)
detected 0/50 OverThink attack prompts
(TPR~=~0\%, 95\% CI: [0\%, 7.1\%]) while correctly passing all
50 clean prompts (FPR~=~0\%).
The attack prompt contains a legitimate user question alongside a
logic puzzle; neither element triggers a content-safety category.
This result shows that content-safety classification is insufficient
as a standalone defence for the evaluated OverThink prompts.
The harm emerges from the computational consequence of the model's
response, so runtime output-side or reasoning-side monitoring is
needed.

\subsection{\heading{Denial-of-Wallet Cost and Early Stopping Value}}

On Claude Opus~4.7 (Anthropic, May~2026 pricing: \$5/M input
tokens, \$25/M output tokens~\cite{anthropic2026pricing}), the mean attack cost per query is
${\approx}$\$0.402 versus ${\approx}$\$0.00234 for clean queries,
a ${\approx}172\times$ cost amplification.
The $676\times$ value in Table~\ref{tab:closed} is output-token
amplification; the ${\approx}172\times$ value here is total API
cost amplification after applying input and output token pricing.
At 10,000 attacks per day, this corresponds to roughly \$4,000
in API fees compared with roughly \$23 for normal traffic.

\rg{} enables early stopping when non-empty reasoning traces are
available and generation can be cancelled by the client.
On DS-R1-Qwen-7B, \rg{} halts detected OverThink attacks before
the full ${\approx}$8,155-token attacked generation completes,
saving about 92\% of output tokens per detected attack.
This cost-saving result should not be interpreted as applying to
Opus~4.7, where thinking blocks were empty and \rg{} was
structurally inapplicable.

\qdm{} does not enable early stopping because it operates post-hoc
on the final output.
For closed-API deployments where only final outputs are visible,
\qdm{} is the output-side monitor in our design.
It detected all Sonnet~4.5 attacks at $N{=}30$ and all Opus~4.7
smoke-test attacks at $N{=}3$, but it cannot reduce generation cost
because the full output has already been produced.

\subsection{\heading{Deployment Summary}}

Table~\ref{tab:closed} and \Cref{sec:results_closed} report the
closed-API detection statistics.
\rg{} is appropriate when non-empty reasoning traces are accessible
and online cancellation is available; in that setting it provides
runtime detection, early stopping, and near-zero FPR across the clean
task types evaluated here.
\qdm{} is appropriate when only final output text is visible.
Token-budget and length-anomaly baselines should not be used as
standalone defences in mixed-workload deployments because their FPR
on code generation is 69\% and 80\%.
Query-anchored \qdm{} should not be used for summarisation workloads
unless the query anchor is replaced with the source document.

%% file: sections/10_related.tex

\section{\heading{Related Work}}
\label{sec:related}

\subsection{\heading{Reasoning-Chain Resource Attacks}}
\label{sec:rel_attacks}

Resource-exhaustion attacks predate reasoning LLMs.
Sponge examples showed that inputs can be crafted to increase
latency and energy use in neural networks~\cite{shumailov2021sponge},
and OWASP now treats unbounded LLM consumption as a security risk
category~\cite{owasp2024llm}.
Reasoning models make the same risk easier to trigger and easier
to bill: the attacker no longer needs to change model weights or
server code, only the text that reaches the reasoning context.

OverThink~\cite{kumar2025overthink} and
ExtendAttack~\cite{extendattack2025} are the closest attack papers
to this work.
OverThink injects benign-looking decoy reasoning tasks into the
context.
ExtendAttack obfuscates the user's own question so the model spends
tokens reconstructing it before answering.
We use both as primary evaluation targets because they stress
different mechanisms: unrelated constraint solving and sequential
decoding.
Our contribution is not another attack construction.
It is a detector for exposed reasoning traces, plus an adaptive
analysis of where semantic monitors start to fail.

Recent and concurrent work broadens the attack side of the problem.
Excessive Reasoning Attack optimizes inputs that stretch reasoning
without breaking utility~\cite{si2025excessive}.
ReasoningBomb formalizes prompt-induced inference-time DoS for
large reasoning models~\cite{liu2026reasoningbomb}.
ThinkTrap studies black-box prompts that induce very long or
non-terminating thinking on closed services~\cite{li2025thinktrap}.
POT removes the need for poisoned external documents by optimizing
prompt-only overthinking attacks~\cite{li2025pot}.
BadThink and BadReasoner study triggered or backdoored overthinking
behaviour~\cite{liu2025badthink,yi2025badreasoner}.
RECUR, CODE, and Sponge Tool Attack extend the resource-exhaustion
view to reflection-driven reasoning, retrieval-augmented generation, and
tool-augmented agentic reasoning~\cite{wang2026recur,zhang2026code,li2026spongetool}.
These papers sharpen the threat landscape.
\rg{} addresses the defence side for one concrete deployment
condition: non-empty reasoning traces that can be monitored during
generation.

\subsection{\heading{Prompt Injection and Adversarial Prompting}}
\label{sec:rel_injection}

Prompt injection attacks place instructions inside text the model
should treat as data~\cite{perez2022ignore}.
Indirect prompt injection moves the payload into retrieved
documents, web pages, emails, or tool outputs consumed by an
LLM-integrated application~\cite{greshake2023injection}.
This is the natural delivery channel for reasoning-chain
consumption: the injected decoy can sit inside retrieved context
while the user asks a normal question.

Adversarial prompting work has also shown that small text patterns
can steer language-model behaviour across examples or
models~\cite{wallace2019universal,zou2023universal,carlini2023llm}.
Broader coercion attacks can induce misdirection, model control,
denial-of-service, or data extraction~\cite{geiping2024coercing}.
Baseline defences for aligned LLMs often rely on input rewriting,
filtering, perplexity checks, or adversarial training~\cite{jain2023baseline}.
Those defences target harmful outputs or instruction-following
failures.
Reasoning-chain consumption is different.
The prompt can look harmless, and the final output can contain no
policy-violating content.
The harm appears in the trajectory: the model spends its reasoning
budget away from the user's task.
\rg{} therefore monitors the generation path itself rather than
classifying the input alone.

\subsection{\heading{Efficient Reasoning and Early Stopping}}
\label{sec:rel_efficiency}

Several papers study excessive reasoning as an efficiency problem.
Answer-convergence monitoring stops generation when partial answers
stabilize~\cite{liu2025convergence}.
SelfBudgeter learns to allocate reasoning budgets before or during
generation~\cite{selfbudgeter2025}.
THINK-Bench evaluates thinking efficiency and chain-of-thought
quality~\cite{li2025thinkbench}, and recent surveys organize the
larger efficient-reasoning literature~\cite{sui2025stopoverthinking}.

These methods mostly ask when a benign model has reasoned enough.
\rg{} asks a narrower security question: whether the reasoning
trace has left the user's task and collapsed into a self-recurrent
decoy region.
That distinction affects the design.
Benign long-form code or math can be lengthy without being an
attack, so a pure budget policy is too blunt.
Conversely, a short attack that hides reasoning in a closed API
cannot be stopped by a trace monitor.
The deployment section separates these cases instead of treating
all overthinking as the same failure.

\subsection{\heading{Semantic Drift and Runtime Monitoring}}
\label{sec:rel_monitoring}

Embedding-based drift detection is a natural fit for prompt
injection because it can compare semantic representations without
model internals.
ZEDD measures embedding drift between benign and suspect prompt
variants for zero-shot prompt-injection detection~\cite{zedd2025}.
\qdm{} uses a simpler output-side version of the same instinct:
it compares final output chunks against the user query and flags
large drift.
This is useful when reasoning text is hidden, but it has two
limits.
It is post-hoc, so it cannot save tokens, and it can fail on
legitimate summarisation because the answer is about the source
article rather than the short instruction.

\rg{} adds the missing recurrence term.
Instead of asking only whether text is close to the query, it asks
whether each reasoning chunk is closer to the query or to the
model's own previous reasoning.
That task-conditioned comparison is the main difference from
query-output drift monitors and length baselines.
It also defines the adaptive boundary: an attacker can evade the
semantic monitor by keeping reasoning task-aligned, but in our C4
evaluation that reduces amplification to $2.2\times$.

%% file: sections/11_conclusion.tex

\section{\heading{Discussion, Limitations, and Conclusion}}
\label{sec:conclusion}

\subsection{\heading{Discussion}}
\label{sec:discussion}

Two variables decide whether this defence is useful in practice:
how much the attack amplifies generation, and what the defender can
see.
When non-empty reasoning traces are exposed, \rg{} can watch the
generation as it happens and stop a decoy loop before the model
spends the full budget.
When traces are hidden or compressed by the provider, that online
signal is either missing or weaker.
In that setting, \qdm{} is the safer fallback, but it can only act
after the output has already been produced.

The adaptive runs expose the boundary.
Full semantic evasion is possible, but in C4 the amplification falls
to $2.2\times$.
C1 is harder: it keeps $11.9\times$ amplification while pulling the
decoy closer to the user query, which weakens both semantic monitors.

Length rules are useful but not enough.
They catch high-amplification attacks, including C1, but they also
flag many clean HumanEval code answers.
That makes them poor standalone monitors for mixed workloads.
\rg{} performs better on the clean tasks tested here because it
does not treat length or query distance alone as evidence of attack;
it looks for task drift and recurrence together.

\subsection{\heading{Limitations}}
\label{sec:limitations}

\noindent\pointlabel{Reasoning visibility.}
\rg{} requires non-empty reasoning text.
It is structurally inapplicable when an API hides reasoning traces,
as observed in the Opus~4.7 smoke test.
Provider-summarised thinking, as observed for Sonnet~4.5, weakens
chunk-level recurrence signals and reduces \rg{} performance.
This is a deployment boundary, not a calibration issue.

\noindent\pointlabel{Adaptive coverage.}
C1 and C4 are evaluated at $N{=}100$ only on DS-R1-Qwen-7B.
The results identify an adaptive boundary on the primary model,
but they do not establish transferability to all reasoning models.
Testing the same stress tests across more open-weight and closed
models remains future work.

\noindent\pointlabel{C1: Topical CSP remains unsolved.}
C1 shows that topical alignment can preserve meaningful
amplification while weakening semantic monitors.
Length-based baselines catch most C1 samples, but their false
positive rates on long-form clean tasks are too high for standalone
deployment.
A stronger defence for C1 likely needs task-aware anchoring beyond
the short user query, such as source-document anchors in RAG.

\noindent\pointlabel{No white-box adaptive attack.}
We do not evaluate gradient-based optimisation against the sentence
encoder or the \rg{} decision rule.
Such an attacker requires access to defence internals and is outside
the black-box prompt-injection threat model used in this paper.

\noindent\pointlabel{Proxy baselines.}
The compression-ratio baseline is a proxy for repetitiveness, not
true perplexity.
We use it because saved text outputs and closed APIs do not expose
token-level logits.
Its adaptive-attack results should therefore be read as proxy
baseline results, not as logit-based perplexity measurements.

\subsection{\heading{Conclusion}}
\label{sec:conclusion_final}

\rg{} detects that failure mode when the reasoning trace is visible.
It looks for chunks that stop moving toward the user task and start
returning to the model's own recent reasoning.
On the primary open-weight model evaluation, \rg{} detects 99\% of
OverThink attacks and 92\% of ExtendAttack instances, while keeping
false positives near zero on the clean task types we tested.
Because the detector runs during generation, it can also stop
detected attacks early and save about 92\% of output tokens.

The limit is equally clear.
\rg{} needs non-empty reasoning traces.
If reasoning is summarised, \rg{} can run with reduced sensitivity;
if reasoning is hidden, it is structurally inapplicable.
\qdm{} remains the post-hoc fallback in both cases, but it cannot
save the generation cost.
The next useful tests are end-to-end RAG deployments, adaptive runs
on more models, document-anchored semantic monitors, and
entropy-based signals when token-level logits are available.